\setlist[description]{leftmargin=\parindent}
\newtheorem{theorem}{Theorem}
\newtheorem{definition}{Definition}
\newtheorem{corollary}{Corollary}
\newtheorem{lemma}{Lemma}
\newtheorem{assumption}{Assumption}
	\newtheorem{remark}{Remark}
	\newtheorem{example}{Example}
\newcommand{\V}{\mathcal{V}}
\newcommand{\diag}{\mathrm{diag}}
\author{Lingfei Wang, Yu Xing, Karl H. Johansson
	\thanks{* This work was supported by the Knut and Alice Wallenberg Foundation Wallenberg Scholar Grant, the Swedish Research Council Distinguished Professor Grant 2017-01078, and the Swedish Foundation for Strategic Research SUCCESS FUS21-0026.}
	\thanks{Lingfei Wang, Yu Xing and Karl H. Johansson are with Division of Decision and Control Systems, School of Electrical Engineering and Computer Science, KTH Royal Institute of Technology, and with Digital Futures, Stockholm, Sweden. Email:
		{\tt\small \{lingfei, yuxing2, kallej\}@kth.se}}%
}
\begin{document}
	\title{\LARGE \bf	
		On final opinions of the Friedkin-Johnsen model over random graphs with partially stubborn community}
	
	\maketitle
	
	\begin{abstract}
		This paper studies the formation of final opinions for the Friedkin-Johnsen (FJ) model with a community of partially stubborn agents. 
		The underlying network of the FJ model is symmetric and generated from a random graph model, in which each link is added independently from a Bernoulli distribution. 
		It is shown that the final opinions of the FJ model will concentrate around those of an FJ model over the expected graph as the network size grows, on the condition that the stubborn agents are well connected to other agents.
		Probability bounds are proposed for the distance between these two final opinion vectors, respectively for the cases where there exist non-stubborn agents or not.
		%
		Numerical experiments are provided to illustrate the theoretical findings.
		%
		The simulation shows that, in presence of non-stubborn agents, the link probability between the stubborn and the non-stubborn communities affect the distance between the two final opinion vectors significantly. 
		Additionally, if all agents are stubborn, the opinion distance decreases with the agent stubbornness.  
	\end{abstract}
	
	\section{Introduction}
	The prosperous development of social media brings massive opportunities to people for exchanging their opinions, which makes it more and more important to understand how these opinions evolve for different purposes, such as rumor control \cite{jiang2020dynamic}, market design \cite{jin2016understanding}, media competition \cite{quattrociocchi2014opinion}.  Social opinion dynamics is to study the evolution of opinions under the influence of individuals' mutual interactions, which can be described by a social network \cite{wasserman1994social,lewin1947frontiers}. Various dynamical models over social networks have been proposed, based on different cognitive mechanisms \cite{degroot1974reaching,hegselmann2002opinion,friedkin1990social,altafini2012consensus,ye2019influence}. Many of them explore the opinion formation process involving stubborn agents \cite{proskurnikov2017tutorial,yildiz2013binary,friedkin1990social,friedkin1999influence}. One type of stubborn agents, called totally stubborn, are those who never change their own opinions. An intuitive interpretation of totally stubborn agents are public opinion leaders such as media outlets, electoral candidates \cite{katz1957two}, while non-stubborn agents can be viewed as opinion followers. For classic opinion models such as the DeGroot model \cite{proskurnikov2017tutorial} and the voter's model \cite{yildiz2013binary}, the group opinions can be shaped by those stubborn opinions in the long run. The concept of stubbornness can be extended if the opinions of the stubborn agents are allowed to change. Such agents are called partially stubborn. 
	A widely studied model that involves partially stubborn agents is the Friedkin-Johnsen (FJ) model \cite{friedkin1990social}, in which each agent constantly takes the weighted average of their initial and their neighbors' opinions to update their current opinions, and the stubbornness is reflected by the weight of the initial opinion. Typically, an FJ model will have opinion cleavage that is commonly observed in reality \cite{proskurnikov2017tutorial,friedkin2015problem}, and mathematically, the opinions of all agents eventually converge to the convex hull of the initial opinions of the stubborn agents. 
	
	An important aspect for the study of opinion dynamics is to predict the final opinions of a group of interacting individuals \cite{friedkin2015problem}, see \cite{castro2017opinion} for an instance on recommender systems. However, a common challenge exists in finding a proper model to describe the structures of social interactions. For this problem, a solution is to use random graph models, which is a useful framework to understand network formation \cite{bollobas1998random,erdHos1960evolution}, especially for large scale networks \cite{strogatz2001exploring}. In particular, random graph models can exhibit concentration properties, that is, the realized graphs are close to their expectations in terms of the associated Laplacian matrices \cite{le2017concentration} and vertex degrees \cite{szymanski2005concentration}. Such concentration properties can be used to characterize the final opinions for opinion dynamics models over random graphs, i.e., the final opinions can be approximated by the opinions evolving over the expected topology of the underlying random graph model. However, even if some existing works study the opinion evolution over random graphs \cite{wu2004social,zehmakan2020opinion,zhang2014opinion,benczik2009opinion}, few of them focus on the concentration behavior of final opinions. In the recent work \cite{xing2024concentration}, the concentration of final opinions is reported for the gossip-based opinion dynamics over random graphs with totally stubborn agents. The concentration behavior is also observed for some extended FJ models, for which the final opinions are explicitly characterized for specific random graph models in \cite{andreou2024opinion,fraiman2024opinion}.   
	
	In this paper, we explore the final opinions of the FJ model over an undirected random graph, in which each link is added independently from a Bernoulli distribution. According to the positivity of stubbornness, the agents are partitioned into two communities, stubborn and non-stubborn, and the stubborn community has homogenous stubbornness. Different from \cite{xing2024concentration}, the stubborn opinions can be influenced by the non-stubborn ones, which makes the stubborn agents a more natural representation of the opinion leaders who need to listen to their followers and adjust their own opinions \cite{glass2021social,pan2022peer}. Moreover, it is inaccessible to apply the techniques in \cite{xing2024concentration} directly to the analysis in this paper (see Remark \ref{re:compare} for more details). The model in this paper is also different from that of \cite{andreou2024opinion,fraiman2024opinion}, in which the models do not consider coexistence of stubborn and non-stubborn agents.
	
	The contribution of this paper is twofold. At first, by means of matrix concentration inequalities, we provide probability bounds for the distance between the final opinions of the FJ model over the random graph and the counterpart over the expected graph, respectively for the cases that stubborn and non-stubborn agents co-exist (Theorem \ref{th:concen}) and that all the agents are stubborn (Theorem \ref{th:all_stubborn}). We reveal that if the network size is large enough, the final opinions of the FJ model over the random graph will concentrate around those of the FJ model over the expected graph, given that the stubborn agents are well-connected to the non-stubborn agents. Additionally, numerical examples are provided to a special two-communities stochastic block model (SBM). It is shown that, in presence of non-stubborn agents, the link probability between the stubborn and the non-stubborn communities affects the distance between the two final opinion vectors significantly (Example \ref{ex:bd_degree}). Moreover, if all agents are stubborn, the opinion distance decreases with the agent stubbornness (Example \ref{ex:bd_stub}). 
	
	The paper is organized as follows: Section \ref{sec:preliminary} gives some preliminary knowledge, Section \ref{sec:prob} formulates the problem, the main technical results are reported in Section \ref{sec:result}, and numerical examples are shown in Section \ref{sec:example}. 
	
	\noindent\textbf{Notation.} All vectors are real column vectors and are denoted with bold lowercase letters $\bf{x},\bf{y},\dots$ The $i$-th entry of a vector $\mathbf{x}$ is denoted by $[\mathbf{x}]_i$ or, if no confusion arises, $x_i$. The symbol ${\rm diag}(\bf{x})$ represents the diagonal matrix with diagonal entries equal to the entries of $\bf{x}$. Matrices are denoted with the capital letters such as $A,B,\dots$, of entries $A_{ij}$ or $[A]_{ij}$. With some abuse of notation, for two square matrices $A, B$, $\diag(A,B)$ is the square matrix with $A, B$ as the two diagonal blocks and all the other entries being $0$. The identity matrix is denoted by $I_n$, with dimension sometimes omitted, depending on the context. The $n$-order vector and matrix with all entries being $0$ or $1$ are denoted by $\mathbf{0}_n$ or $\mathbf{1}_n$, respectively with the dimensions omitted if there is no confusion. Let $\mathbf{e}_i$ be the vector with the $i$th entry as $1$ and all the others as $0$. We use $[n]$ to represent the set $\{1,\dots,n\}$.
	Given a set $\mathcal{C}$, we use $|\mathcal{C}|$ to denote its cardinality. A square matrix $A$ is called (row) stochastic if $A\geq \mathbf{0}$ and $\mathbf{1}= A\mathbf{1}$. Let $\|\cdot\|$ be the operator of $2$-norm of vectors/matrices Given a real number $x$, let $[x]$ be the nearest integer to $x$. For two number sequences $f(n)$ and $g(n)$, we write $f(n)=O(g(n))$ if there exists a constant $C>0$ such that $|f(n)|<Cg(n)$ holds for all $n\in\mathbb{N}$. Given two real numbers $a$ and $b$, define $a\vee b, a\wedge b$ as $\max\{a, b\}$ and $\min\{a,b\}$, respectively.  
	
	\section{Preliminaries}\label{sec:preliminary}
	
	\subsection{Graphs and random graph model}
	
	Consider a network with nodes (agents) indexed in $\mathcal{V}=[n]$. The network is represented by an undirected weighted graph $\mathcal{G}=(\mathcal{V},\mathcal{E},W)$, where $\mathcal{E}$ is a set of unordered pairs of nodes, and $\{i,j\}\in\mathcal{E}$ means that there is a link between node $i$ and node $j$. The matrix $W=[w_{ij}]\in\mathbb{R}_{\geq 0}^{n\times n}$ is a symmetric weight matrix, with $w_{ij}>0$ if and only if $\{i,j\}\in\mathcal{E}$. Moreover, if $W\in\{0,1\}^{n\times n}$, it is called the adjacency matrix of the graph $\mathcal{G}$. For each agent $i$, its neighbor set, denoted as $\mathcal{N}_i$, is the set of all the agents connected to $i$, i.e., $\mathcal{N}_i=\{j:\{i,j\}\in\mathcal{E}\}$. The degree of agent $i$, denoted as $d_i(\mathcal{G})$, is the total weights of the links between $i$ and its neighbors, i.e., $d_i(\mathcal{G})=\sum_{j\in\mathcal{N}_i}w_{ij}$. The degree matrix of $\mathcal{G}$ is defined as $D_\mathcal{G}:=\diag(d_1(\mathcal{G}),\dots,d_n(\mathcal{G}))$ (the argument $\mathcal{G}$ is sometimes omitted if the graph is clear from context). 
	
	In this paper, we consider symmetric random graphs with each edge generated independently from a Bernoulli distribution. Specifically, let $\Psi=[\psi_{ij}]\in[0,1]^{n\times n}$ be a symmetric probability matrix, the random graph model is then defined as follows. 
	
	\begin{definition}[Random graph model]
		A random graph model, denoted as $\mathrm{RG}(\V,\Psi)$, is a process that generates an undirected graph $\mathcal{G}=(\V,\mathcal{E},A)$, with each edge $\{i,j\}$ generated independently by the probability $\psi_{ij}$. The matrix $A=[a_{ij}]$ is the adjacency matrix of $\mathcal{G}$, that is, $\mathbb{P}(a_{ij}=1)=1-\mathbb{P}(a_{ij}=0)=\psi_{ij}$. The graph $\mathcal{G}$ is called a realization of $\mathrm{RG}(\V,\Psi)$. 
	\end{definition}
	
	Note that only the upper triangular part of $\Psi$ is used to generate the random links. A widely studied random graph model is the stochastic block model (SBM) \cite{abbe2018community,lee2019review}, defined as follows.
	
	\begin{definition}[SBM]
		Assume that the agent set $\V$ consists of $K\in\mathbb{N}_+$ disjoint communities, i.e., $\V=\cup_{k\in[K]}\V_k$ and $\V_{k_1}\cap\V_{k_2}=\emptyset, \forall k_1\not= k_2$. For each $i\in\V$, let $c_i$ be the index of the community to which the agent $i$ belongs, i.e., $i\in\V_{c_i}$. A random graph model $\mathrm{RG}(\V,\Psi)$ is called an SBM if there exists a symmetric matrix $\Pi=[\pi_{ij}]\in [0,1]^{K\times K}$, such that for any $i,j\in\V$, it holds $\psi_{ij}=\pi_{c_ic_j}, i\not= j$ and $\psi_{ii}=0$. 
	\end{definition}
	
	\subsection{FJ model}
	
	The FJ model is an opinion dynamics model in which some agents behave stubbornly, in the sense  that they defend their positions while discussing with the other agents \cite{friedkin1999influence}. In the FJ model over a graph $\mathcal{G}=(\V,\mathcal{E},W)$ with $\V=[n]$, for each agent $i$, the update of its opinion is
	\begin{equation}
		x_i(t+1)=(1-\theta_i)\sum_{j\in\mathcal{N}_i}\frac{w_{ij}}{d_i}x_j(t)+\theta_i x_i(0), \quad t\in\mathbb{N},
	\end{equation}
	where $x_i(t)\in\mathbb{R}$ is the opinion of $i$ at time $t$, and $\theta_i\in[0,1]$ is the stubbornness of agent $i$. The compact form of the FJ model can be written as 
	\begin{equation}\label{eq:FJ}
		\mathbf{x}(t+1)=(I-\Theta)D_\mathcal{G}^{-1}W\mathbf{x}(t)+\Theta\mathbf{x}(0),\quad t\in\mathbb{N},
	\end{equation}
	where $\mathbf{x}(t):=[x_1(t),\dots,x_n(t)]^\top$ is the opinion vector and $\Theta:=\diag (\theta_1,\dots,\theta_n)$ is a diagonal stubbornness matrix; $D_\mathcal{G}^{-1}W$ is the normalized weight matrix of $\mathcal{G}$. 
	
	For the FJ model \eqref{eq:FJ} over the graph $\mathcal{G}=(\V,\mathcal{E},W)$, according to Theorem~21 of \cite{proskurnikov2017tutorial}, the following result holds. 
	
	\begin{lemma}\label{le:FJ_conv}
		For the FJ model \eqref{eq:FJ}, if for each agent $i$, either $\theta_i>0$ or $i$ is connected to some agent $j$ with $\theta_j>0$, the opinion vector $\mathbf{x}(t)$ will converge, and
		\begin{equation}\label{eq:form_P}
			\mathbf{x}(\infty):=\lim_{t\to\infty}\mathbf{x}(t)=P\mathbf{x}(0), \; P=(I-(I-\Theta)D_\mathcal{G}^{-1}W)^{-1}\Theta,
		\end{equation}
		where $P$ is a stochastic matrix.
	\end{lemma}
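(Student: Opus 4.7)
The plan is to recast the FJ recursion as an affine linear iteration driven by a substochastic matrix, use the connectivity hypothesis to obtain strict contractivity, and then pass to the limit to identify the limiting map.

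First, I would introduce $M := (I-\Theta)D_\mathcal{G}^{-1}W$, so that \eqref{eq:FJ} reads $\mathbf{x}(t+1) = M\mathbf{x}(t) + \Theta\mathbf{x}(0)$. Because $D_\mathcal{G}^{-1}W$ is row-stochastic and $I-\Theta$ is a nonnegative diagonal matrix with entries $1-\theta_i\in[0,1]$, the matrix $M$ is entrywise nonnegative with row sums $M\mathbf{1}=(I-\Theta)\mathbf{1}$; in particular, row $i$ sums to $1-\theta_i$, which is strictly less than $1$ precisely when $\theta_i>0$.

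The main obstacle is establishing $\rho(M)<1$, since some rows of $M$ can sum to exactly one. For this I would invoke the graph-theoretic hypothesis: every agent $i$ is connected through $\mathcal{G}$ to a stubborn agent $j_i^*$ with $\theta_{j_i^*}>0$. Because the weights along the connecting path are positive, $[M^k]_{i,j_i^*}>0$ for some finite $k$; taking $N$ at least the diameter of $\mathcal{G}$, every row of $M^N$ has a strictly positive entry in a column indexed by a stubborn agent. Expanding $M^{N+1}\mathbf{1} = M^N(I-\Theta)\mathbf{1}$ row-by-row then yields $[M^{N+1}\mathbf{1}]_i = [M^N\mathbf{1}]_i - \sum_j [M^N]_{ij}\theta_j \leq 1 - [M^N]_{i,j_i^*}\theta_{j_i^*} < 1$ for every $i$. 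Hence $\|M^{N+1}\|_\infty<1$ and consequently $\rho(M)\leq \|M^{N+1}\|_\infty^{1/(N+1)}<1$.

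Once $\rho(M)<1$ is in hand, the iterate admits the closed form $\mathbf{x}(t) = M^t\mathbf{x}(0) + \sum_{\ell=0}^{t-1} M^\ell \Theta\mathbf{x}(0)$. Letting $t\to\infty$ makes $M^t\to 0$ and the partial sums converge to the Neumann series $(I-M)^{-1} = \sum_{\ell\geq 0} M^\ell$, so $\mathbf{x}(\infty) = (I-M)^{-1}\Theta\mathbf{x}(0) = P\mathbf{x}(0)$. To finish, $P$ is stochastic: nonnegativity follows from the Neumann representation together with $\Theta\geq 0$, and $P\mathbf{1}=\mathbf{1}$ follows from $(I-M)\mathbf{1}=\Theta\mathbf{1}$, which is just the row-sum identity derived above.
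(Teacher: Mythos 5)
The paper itself does not prove this lemma: it is quoted from Theorem~21 of the cited tutorial (Proskurnikov--Tempo), so there is no in-paper proof to compare against. Your self-contained argument is the standard one for that result — rewrite the update as $\mathbf{x}(t+1)=M\mathbf{x}(t)+\Theta\mathbf{x}(0)$ with the substochastic $M=(I-\Theta)D_\mathcal{G}^{-1}W$, show $\rho(M)<1$, sum the Neumann series to get $P=(I-M)^{-1}\Theta$, and verify $P\geq 0$, $P\mathbf{1}=\mathbf{1}$ from $(I-M)\mathbf{1}=\Theta\mathbf{1}$. The closed form of the iterate, the passage to the limit, and the stochasticity check are all correct.

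One intermediate step is false as stated and the strict inequality chain leans on it: the claim that for any $N$ at least the diameter of $\mathcal{G}$, every row of $M^N$ has a strictly positive entry in a column indexed by a stubborn agent. Two obstructions: (i) if some $\theta_j=1$, row $j$ of $M$ is identically zero, so row $j$ of every power is zero, and path products passing through such an agent vanish; (ii) parity — on a bipartite graph, e.g.\ the $4$-cycle $1$--$2$--$3$--$4$--$1$ with $\theta_1,\theta_3>0$ and $\theta_2=\theta_4=0$, one has $[M]_{2,1}>0$ but $[M^2]_{2,1}=[M^2]_{2,3}=0$, so positivity at some power $k$ does not transfer to a common power $N\geq\mathrm{diam}(\mathcal{G})$. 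Hence $[M^{N+1}\mathbf{1}]_i\leq 1-[M^N]_{i,j_i^*}\theta_{j_i^*}<1$ is not justified for every row. The repair is short: since $M\geq 0$ and $M\mathbf{1}\leq\mathbf{1}$, the row sums $[M^k\mathbf{1}]_i$ are nonincreasing in $k$, so it suffices to exhibit, for each $i$, one exponent $k_i$ with $[M^{k_i}\mathbf{1}]_i<1$; take $k_i=1$ if $\theta_i>0$, and otherwise let $k_i-1$ be the distance from $i$ to its nearest agent with positive stubbornness (all intermediate nodes on such a shortest path are non-stubborn, so $[M^{k_i-1}]_{i,j_i^*}>0$ and $[M^{k_i}\mathbf{1}]_i=\sum_j[M^{k_i-1}]_{ij}(1-\theta_j)<1$). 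Then $\|M^{N}\|_\infty<1$ for $N=\max_i k_i$ and the rest of your proof goes through unchanged. (Note also that the lemma's hypothesis, in line with Assumption~2, is most naturally read as adjacency to a stubborn agent, in which case $k_i\leq 2$ and the patch is immediate; your path-connectivity reading is the more general one and is the setting of the cited theorem.)
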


	\subsection{Concentration inequalities}
	
	To estimate the sum of independent random matrices, the following two concentration inequalities are usually used.
	
	\begin{lemma}[Bernstein inequality]
		Let $X_1,\dots, X_N\in\mathbb{R}^{n\times n}$ be independent zero-mean random matrices. Suppose that $\|X_i\|\leq U, i\in[N]$ holds with probability $1$. Then for any $a\geq 0$, it holds that 
		\begin{equation}\label{eq:bern_square}
			\mathbb{P}\left(\left\|\sum_{i=1}^NX_i\right\|\geq a\right) \leq 2n\exp\left\{\frac{-a^2/2}{\sigma^2+Ua/3}\right\},
		\end{equation}
		where $\sigma^2=\|\sum_{i=1}^N\mathbb{E}[X_i^2]\|$. 
	\end{lemma}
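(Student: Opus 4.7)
The plan is to follow the standard Ahlswede--Winter / Tropp route for matrix concentration, adapted to handle non-Hermitian summands. First I would pass from general real matrices to Hermitian ones via the Hermitian dilation $\mathcal{H}(X)=\bigl[\begin{smallmatrix} 0 & X \\ X^\top & 0 \end{smallmatrix}\bigr]\in\mathbb{R}^{2n\times 2n}$. The dilation is linear, preserves the operator norm, keeps independence and zero-mean properties, and satisfies $\mathcal{H}(X)^2=\diag(XX^\top,X^\top X)$, so $\|\mathbb{E}[\mathcal{H}(X_i)^2]\|=\|\mathbb{E}[X_iX_i^\top]\|\vee\|\mathbb{E}[X_i^\top X_i]\|$, which is comparable to the variance proxy $\sigma^2$ in the statement. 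Since $\|\sum X_i\|=\lambda_{\max}(\mathcal{H}(\sum X_i))$, it suffices to bound $\mathbb{P}(\lambda_{\max}(\sum_i Y_i)\ge a)$ for Hermitian zero-mean $Y_i$ bounded in norm by $U$, at the cost of replacing the dimensional prefactor $n$ by $2n$.

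Next I would apply the matrix Laplace transform method. For any $\theta>0$, Markov's inequality gives
\begin{equation*}
\mathbb{P}\Bigl(\lambda_{\max}\Bigl(\sum_i Y_i\Bigr)\ge a\Bigr)\le e^{-\theta a}\,\mathbb{E}\,\mathrm{tr}\exp\Bigl(\theta\sum_i Y_i\Bigr).
\end{equation*}
The independence of $Y_i$ combined with Lieb's concavity theorem for the map $A\mapsto\mathrm{tr}\exp(H+\log A)$ then yields the subadditivity bound
\begin{equation*}
\mathbb{E}\,\mathrm{tr}\exp\Bigl(\theta\sum_i Y_i\Bigr)\le \mathrm{tr}\exp\Bigl(\sum_i\log\mathbb{E}[e^{\theta Y_i}]\Bigr),
\end{equation*}
which is the usual matrix analogue of the scalar moment generating function factorization.

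The remaining work is to control each $\mathbb{E}[e^{\theta Y_i}]$. Using $\|Y_i\|\le U$, a Taylor expansion with operator-convex remainder gives the semidefinite bound $e^{\theta Y}\preceq I+\theta Y+g(\theta)Y^2$ with $g(\theta)=(e^{\theta U}-1-\theta U)/U^2$, and hence, since $\mathbb{E}[Y_i]=0$, $\mathbb{E}[e^{\theta Y_i}]\preceq I+g(\theta)\mathbb{E}[Y_i^2]\preceq\exp(g(\theta)\mathbb{E}[Y_i^2])$. Plugging this into the trace-exponential bound, using monotonicity of the trace exponential and $\lambda_{\max}(\sum_i\mathbb{E}[Y_i^2])\le\sigma^2$, we obtain
\begin{equation*}
\mathbb{P}\Bigl(\Bigl\|\sum_i X_i\Bigr\|\ge a\Bigr)\le 2n\exp\bigl(-\theta a+g(\theta)\sigma^2\bigr).
\end{equation*}
Finally I would optimize the exponent. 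Choosing $\theta=a/(\sigma^2+Ua/3)$ and invoking the elementary inequality $g(\theta)\le\theta^2/(2(1-\theta U/3))$ valid for $0\le\theta<3/U$ produces exactly the stated bound.

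The only genuinely delicate step is the moment generating function bound for each summand, since one must justify the semidefinite inequality $e^{\theta Y}\preceq I+\theta Y+g(\theta)Y^2$ through a monotone functional calculus argument rather than a termwise scalar comparison; everything else is bookkeeping around Lieb's theorem and the Laplace transform trick.
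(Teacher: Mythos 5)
The paper does not prove this lemma at all: it is quoted in the preliminaries as a standard matrix concentration result (Tropp's matrix Bernstein inequality), so there is no in-paper argument to compare against. Your proposal is exactly the standard proof of that result --- Hermitian dilation, the matrix Laplace transform bound via Markov, subadditivity of the cumulant generating function through Lieb's concavity theorem, the semidefinite MGF bound $e^{\theta Y}\preceq I+\theta Y+g(\theta)Y^2$, and the optimization $\theta=a/(\sigma^2+Ua/3)$ together with $g(\theta)\le\tfrac{\theta^2/2}{1-\theta U/3}$ --- and the bookkeeping you describe does reproduce the stated tail bound with prefactor $2n$. Two remarks. First, the one place you gloss is the variance proxy after dilation: $\|\sum_i\mathbb{E}[\mathcal{H}(X_i)^2]\|=\|\sum_i\mathbb{E}[X_iX_i^\top]\|\vee\|\sum_i\mathbb{E}[X_i^\top X_i]\|$ is \emph{not} ``comparable'' to $\sigma^2=\|\sum_i\mathbb{E}[X_i^2]\|$ for general non-symmetric $X_i$ (take $X$ nilpotent with $X^2=0$ but $XX^\top\neq 0$), so your argument proves the inequality with the Gram-matrix variance, not with the $\sigma^2$ literally written in the lemma. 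The lemma as stated is really the symmetric-matrix version (and that is how the paper uses it: the summands $D_s-\bar D_s$ and the $X_{ji}$ in the proofs of Theorems 1 and 2 are symmetric, in which case $X^2=XX^\top$ and the dilation step is unnecessary, the factor $2n$ coming instead from controlling both spectral tails). You should either add a symmetry hypothesis or state the rectangular variance proxy. Second, a minor simplification: the bound $e^{\theta Y}\preceq I+\theta Y+g(\theta)Y^2$ needs no operator-convexity or monotone functional calculus --- all three matrices are functions of the same Hermitian $Y$, so the scalar inequality $e^{\theta y}\le 1+\theta y+g(\theta)y^2$ on $[-U,U]$ transfers directly by the spectral mapping theorem; the genuinely nontrivial ingredient in the whole argument is Lieb's theorem, not this step.
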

	
	\begin{lemma}[Chernoff inequality]
		Suppose that $X_1, \dots, X_n$ are independent Bernoulli random variables such that $\mathbb{P}(X_i=1)=p_i=1-\mathbb{P}(X_i=0)$. Let $X=\sum_{i=1}^nX_i$ and $\mu=\mathbb{E}[X]=\sum_{i=1}^n p_i$. Then for $0<\delta<1$, it holds $\mathbb{P}(X\leq (1-\delta)\mu)\leq e^{-\mu\delta^2/2}$.
	\end{lemma}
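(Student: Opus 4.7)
The plan is to apply the standard Chernoff--Cram\'er exponential moment method to the lower-tail event. First I would introduce a free parameter $t>0$, rewrite the event $\{X\leq (1-\delta)\mu\}$ as $\{e^{-tX}\geq e^{-t(1-\delta)\mu}\}$, and invoke Markov's inequality on the non-negative random variable $e^{-tX}$ to obtain
\begin{equation*}
\mathbb{P}(X\leq (1-\delta)\mu) \leq e^{t(1-\delta)\mu}\,\mathbb{E}[e^{-tX}].
\end{equation*}
Exploiting the independence of the $X_i$ factors the moment generating function as $\mathbb{E}[e^{-tX}]=\prod_{i=1}^n(1+p_i(e^{-t}-1))$, and applying the elementary inequality $1+x\leq e^x$ to each factor collapses the product to $\exp(\mu(e^{-t}-1))$. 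Hence the tail bound becomes $\exp\{\mu(e^{-t}-1)+t(1-\delta)\mu\}$.

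Next I would optimize this bound over $t>0$. Differentiating the exponent in $t$ and setting the derivative to zero yields the unique critical point $t^\ast=-\ln(1-\delta)$, which is positive precisely because $0<\delta<1$. Substituting $t^\ast$ produces the sharp estimate
\begin{equation*}
\mathbb{P}(X\leq (1-\delta)\mu) \leq \exp\bigl\{-\mu\bigl[\delta+(1-\delta)\ln(1-\delta)\bigr]\bigr\}.
\end{equation*}

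The final step is to relax this to the simpler Gaussian-type form $e^{-\mu\delta^2/2}$ stated in the lemma. This reduces to the real-variable inequality $\delta+(1-\delta)\ln(1-\delta)\geq \delta^2/2$ on $(0,1)$, which I would verify by defining $g(\delta):=\delta+(1-\delta)\ln(1-\delta)-\delta^2/2$, checking that $g(0)=0$, and showing $g'(\delta)=-\ln(1-\delta)-\delta\geq 0$ via the Taylor expansion $-\ln(1-\delta)=\sum_{k\geq 1}\delta^k/k > \delta$. No step is delicate here; the mildly non-routine part is the closing scalar inequality, and even that is immediate from a one-line monotonicity argument, so I expect no substantive obstacle.
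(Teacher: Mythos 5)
Your proposal is correct: the exponential-moment (Markov on $e^{-tX}$) argument, the factorization via independence, the bound $1+x\leq e^x$, the optimal choice $t^\ast=-\ln(1-\delta)$, and the closing scalar inequality $\delta+(1-\delta)\ln(1-\delta)\geq \delta^2/2$ (verified by the monotonicity of $g$) are all sound and complete. Note that the paper states this Chernoff inequality as a standard preliminary without proof, so there is no in-paper argument to compare against; your derivation is the canonical one and fully justifies the stated bound.
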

	
	\section{Problem formulation}\label{sec:prob}
	
	Consider a random graph model $\mathrm{RG}(\V,\Psi)$ with a group of agents indexed in $\V=[n]$. The probability matrix $\Psi=[\psi_{ij}]\in[0,1]^{n\times n}$ is symmetric. In this paper, we assume that no self-loop exists, i.e., $\psi_{ii}=0, \forall i\in\V$. Let $\mathcal{G}=(\V,\mathcal{E},A)$ be a realization of $\mathrm{RG}(\V,\Psi)$. Consider the FJ model \eqref{eq:FJ} over $\mathcal{G}$, with $W=A$. The opinion evolution is 
	\begin{equation}\label{eq:FJ_rand}
		\mathbf{x}(t+1)=(I-\Theta)D^{-1}A\mathbf{x}(t)+\Theta\mathbf{x}(0),
	\end{equation}
	where $\mathbf{x}(t)=[x_1(t),\dots,x_n(t)]^\top$ is the opinion vector, $D:=D_\mathcal{G}$ is the degree matrix of $\mathcal{G}$, and $\Theta=\diag(\theta_1,\dots, \theta_n)$ is the stubbornness matrix to be specified later on.

	In this paper, we assume that $n_s$ agents are (partially) stubborn, with fixed stubbornness $\theta\in(0,1)$, while all the other $n_r$ agents are non-stubborn. Note that $n_s+n_r=n$. We use $\V_s=\{1,\dots, n_s\}$ and $\V_r=\{n_s+1,\dots,n\}$ to represent the sets of stubborn agents and non-stubborn agents, respectively, that is
	\begin{equation}\label{eq:stb}
		\theta_i=\left\{\begin{array}{cc}
			\theta, & i\in\V_s, \\
			0, & i\in\V_r.
		\end{array}\right.
	\end{equation}
	In compact form, the stubbornness matrix $\Theta$ is 
	\[\Theta=\diag(\theta I_{n_s},\mathbf{0}_{n_r}). \]
	Accordingly, for the FJ model \eqref{eq:FJ_rand}, if the conditions for Lemma \ref{le:FJ_conv} are satisfied, it can be verified that
	\begin{equation}\label{eq:form_P}
		\begin{aligned}
			\mathbf{x}(\infty)=\lim_{t\to\infty}\mathbf{x}(t)=P\mathbf{x}(0), \quad       P=M^{-1}(I-\Theta)^{-1}\Theta D,
		\end{aligned}
	\end{equation}
	with
	\begin{equation}\label{eq:form_M}
		\begin{aligned}
			M=(I-\Theta)^{-1}\Theta D+D-A.
		\end{aligned}
	\end{equation}
	
	Let $\Bar{\mathcal{G}}=(\V,\Bar{\mathcal{E}},\bar{A})$ be the expected graph of the random graph model $\mathrm{RG}(\V,\Psi)$, with $\Bar{A}:=\mathbb{E}[A]=\Psi$ as the expectation of the adjacency matrix $A$ and $\bar{\mathcal{E}}$ defined correspondingly.
	Under the division of stubborn/non-stubborn agents, i.e., $\V=\V_s\cup \V_r$, some parameters of vertex degrees of $\Bar{\mathcal{G}}$  are given as follows:
	
	\begin{itemize}
		\item $\delta^s:=\min_{i\in\V_s}d_i(\bar{\mathcal{G}})=\min_{i\in\V_s}\sum_{j\not=i}\psi_{ij}$: the minimum expected degree of stubborn agents;
		\item $\delta^{rs}:=\min_{i\in\V_r}\sum_{j\in\V_s}\psi_{ij}$: the minimum expected stubborn degree of non-stubborn agents;
		\item $\Delta^s:=\max_{i\in\V_s}d_i(\bar{\mathcal{G}})=\max_{i\in\V_s}\sum_{j\not=i}\psi_{ij}$: the maximum expected degree of stubborn agents;
		\item $\Delta^{rs}:=\max_{i\in\V_r}\sum_{j\in\V_s}\psi_{ij}$: the maximum expected stubborn degree of non-stubborn agents.
		\item $\Delta^r:=\max_{i\in\V_r}d_i(\bar{\mathcal{G}})=\max_{i\in\V_s}\sum_{j\not=i}\psi_{ij}$: the maximum expected degree of non-stubborn agents;
		\item $\Delta:=\Delta^r\vee \Delta^s$: the maximum expected degree of all the agents.
	\end{itemize}
	
	The following assumptions are used throughout the paper.
	
	\begin{assumption}\label{assum:bound}
		For the random graph model $\mathrm{RG}(\V,\Psi)$ with $\V=\V_s\cup \V_r$, there exists two constants $c_1, c_2>8$ such that $\delta^{rs}\geq c_1 \log n $ and $\delta^{s}\geq c_2 \log n$.
	\end{assumption}
	
	\begin{assumption}\label{assum:cvg}
		For the random graph model $\mathrm{RG}(\V,\Psi)$ with $\V=\V_s\cup \V_r$, given any agent $i$, if $i\in\V_r$, there must exist $j\in\V_s$ such that $\psi_{ij}>0$.
	\end{assumption}

	In \cite{xing2024concentration}, opinion concentration is proved for the gossip-based DeGroot model over randoms graphs with totally stubborn agent. That is, the final opinions tend to concentrate around those of the expected opinion dynamics if the network size is large. Motivated by this, in this paper, we want to characterize the concentration property of the final opinion vector $\mathbf{x}(\infty)$ in \eqref{eq:form_P}, under the randomness of the adjacency matrix $A$. 
	
	Specifically, consider the opinion dynamics \eqref{eq:FJ} over the expected graph $\Bar{\mathcal{G}}$ with the same initial opinions and stubbornness, that is 
	\begin{equation}\label{eq:FJ_exp}
		\Bar{\mathbf{x}}(t+1)=(I-\Theta)\Bar{D}^{-1}\Bar{A}\Bar{\mathbf{x}}(t)+\Theta\Bar{\mathbf{x}}(0).
	\end{equation}
	where $\Bar{\mathbf{x}}(t)$ represents the opinion vector for the expected dynamics at time $t$, with $\Bar{\mathbf{x}}(0)=\mathbf{x}(0)$; $\Bar{D}=D_{\Bar{\mathcal{G}}}$ is the degree matrix of $\Bar{\mathcal{G}}$. 
	Under Assumption \ref{assum:cvg}, from Lemma \ref{le:FJ_conv}, the opinion vector $\bar{\mathbf{x}}(t)$ converges, and 
	\begin{equation}\label{eq:form_P_exp}
		\bar{\mathbf{x}}(\infty):=\lim_{t\to\infty}\bar{\mathbf{x}}(t)=\bar{P}\mathbf{x}(0), \; \Bar{P}=\Bar{M}^{-1}(I-\Theta)^{-1}\Theta \Bar{D},
	\end{equation}
	where $\Bar{M}$ is defined the same as in \eqref{eq:form_M}, with the matrices $D, A$ replaced by $\Bar{D}$ and $\Bar{A}$. The problem of interest is stated as follows: \newline
	
	\noindent\textbf{Problem}. For the opinion model \eqref{eq:FJ_rand} with partially stubborn community, this paper is going to provide some probability bounds to the opinion distance $\|\mathbf{x}(\infty)-\Bar{\mathbf{x}}(\infty)\|$, with relation to the network size $n$. Moreover, we also want to understand how these bounds are affected by the parameters of the random graph model $\mathrm{RG}(\V,\Psi)$, as well as the agent stubbornness $\theta$.

	\section{Main results}\label{sec:result}
	
	This section gives probability bounds for the opinion distance $\|\mathbf{x}(\infty)-\Bar{\mathbf{x}}(\infty)\|$. To estimate $\|\mathbf{x}(\infty)-\Bar{\mathbf{x}}(\infty)\|$, a probability lower bound for the minimum eigenvalue of the matrix $M$ (defined in \eqref{eq:form_M}) needs to be given beforehand.

	\subsection{Lower bound for $\lambda_{\min}(M)$}
	
	Before going to the main results, some definitions regarding link weights of the random graph $\mathcal{G}$ need to be given. 
	
	\begin{itemize}
		\item $d_i^s:=|\{j\in\mathcal{N}_i\cap \V_s\}|, d_i^r:=|\{j\in\mathcal{N}_i\cap \V_r\}|$: the stubborn and non-stubborn degree of agent $i$, respectively; 
		\item $d_{\min}^s=\min_{i\in\V_s}d_i(\mathcal{G})$: the minimum degree of stubborn agents, 
		\item $d_{\min}^{rs}=\min_{i\in\V_r}d_i^s$: the minimum stubborn degree of non-stubborn agents.
	\end{itemize}
	
	It is easy to see that $M$ is a symmetric matrix. According to the Gershgorin disk theorem, $M$ is positive semi-definite. In fact, the eigenvalues of $M$ can be lower bounded by the values defined above. 
	
	\begin{lemma}\label{le:min_eig}
		Consider the model \eqref{eq:FJ_rand} over the random graph $\mathcal{G}$, with agent stubbornness defined in \eqref{eq:stb}. Assume that each non-stubborn agent is connected to some stubborn one. Then, for the matrix $M$ defined in \eqref{eq:form_M}, it holds $\lambda_{\min}(M)\geq b_1$, with
		\begin{equation*}
			\begin{aligned}
				b_1=
				\frac{\theta d_{\min}^sd_{\min}^{rs} }{d_{\min}^s+(1-\theta)d_{\min}^{rs}}.
			\end{aligned}
		\end{equation*}
	\end{lemma}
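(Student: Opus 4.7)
The plan is to use a weighted Gershgorin disk argument. Observe that $M$ is symmetric with diagonal entries $M_{ii}=d_i/(1-\theta)$ for $i\in\V_s$ and $M_{ii}=d_i$ for $i\in\V_r$, and off-diagonal entries $M_{ij}=-a_{ij}\in\{-1,0\}$. A naive Gershgorin bound on $M$ alone is useless, because every row $i\in\V_r$ has zero row-sum and therefore only yields $\lambda_{\min}(M)\ge 0$. To break this degeneracy, I would perform a diagonal similarity transform $D_w M D_w^{-1}$, with $D_w=\diag(w_1,\ldots,w_n)$, $w_i=1$ for $i\in\V_s$ and $w_i=\beta$ for $i\in\V_r$, where $\beta\in(1-\theta,1)$ is to be chosen; the spectrum is preserved, so the Gershgorin disks of the conjugated matrix lower-bound $\lambda_{\min}(M)$.

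Next, I would evaluate the Gershgorin lower bound $M_{ii}-w_i\sum_{j\sim i}1/w_j$ on each row. A direct calculation gives $(1-\beta)d_i^s$ for $i\in\V_r$ and $\gamma d_i^s + C\,d_i^r$ for $i\in\V_s$, with $\gamma=\theta/(1-\theta)$ and $C=[\beta-(1-\theta)]/[(1-\theta)\beta]$. Using $d_i^s\ge d_{\min}^{rs}$ on $\V_r$ (Assumption \ref{assum:cvg}) and $d_i\ge d_{\min}^s$ on $\V_s$, the worst-case row bounds collapse to $(1-\beta)d_{\min}^{rs}$ and $C\,d_{\min}^s$, respectively, the latter achieved at $d_i^s=0$, $d_i^r=d_{\min}^s$ once one verifies $0<C<\gamma$.

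To finish, I would choose $\beta=1-b_1/d_{\min}^{rs}$, which turns the $\V_r$ worst-case bound into exactly $b_1$. The elementary inequality $0<b_1<\theta d_{\min}^{rs}$, read off directly from the definition of $b_1$, simultaneously places $\beta$ in the required window $(1-\theta,1)$ and secures $0<C<\gamma$. The remaining check $C\,d_{\min}^s\ge b_1$ reduces, after substituting $\beta$ and invoking the defining identity $b_1[d_{\min}^s+(1-\theta)d_{\min}^{rs}]=\theta d_{\min}^s d_{\min}^{rs}$, to $(1-\theta)b_1^2\ge 0$, which is trivially true. The main obstacle is really just identifying the correct weight $\beta=1-b_1/d_{\min}^{rs}$; once this ansatz is in place, the verification is mechanical bookkeeping, and the specific harmonic-mean form of $b_1$ drops out of the algebra rather than having to be matched directly.
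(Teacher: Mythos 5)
Your proof is correct, and it takes a genuinely different route from the paper's. The paper argues through an eigenvector of $\lambda_{\min}(M)$: it supposes $\lambda_{\min}(M)<\frac{\theta d_{\min}^s}{1-\theta}$ and $\lambda_{\min}(M)<d_{\min}^{rs}$, compares the largest eigenvector entries over $\V_s$ and $\V_r$, and chains the two resulting ratio inequalities into a quadratic inequality in $\lambda_{\min}(M)$ that yields $b_1$, finally checking that $b_1$ is the smallest of the three thresholds involved. You instead apply Gershgorin to the conjugated matrix $D_wMD_w^{-1}$ with weights $1$ on $\V_s$ and $\beta=1-b_1/d_{\min}^{rs}$ on $\V_r$; I verified your row computations ($(1-\beta)d_i^s$ on $\V_r$, and $\gamma d_i^s+Cd_i^r$ on $\V_s$ with $\gamma=\theta/(1-\theta)$, $C=\frac{\beta-(1-\theta)}{(1-\theta)\beta}$), that $\beta\in(1-\theta,1)$ follows from $0<b_1<\theta d_{\min}^{rs}$ and gives $0<C<\gamma$, and that $Cd_{\min}^s\geq b_1$ indeed collapses to $(1-\theta)b_1^2\geq 0$ via the identity $b_1[d_{\min}^s+(1-\theta)d_{\min}^{rs}]=\theta d_{\min}^sd_{\min}^{rs}$. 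The two arguments are cousins (the eigenvector-ratio argument is essentially how weighted Gershgorin bounds are derived), but yours is more mechanical once the weight $\beta$ is guessed, and it sharpens Remark \ref{re:compare}: although the unweighted Gershgorin bound for $M$ is only $0$, a two-level diagonal scaling already recovers the full bound $b_1$; the paper's route, by contrast, needs no ansatz and produces $b_1$ as the root of its quadratic. Two minor points: the step $d_i^s\geq d_{\min}^{rs}$ on $\V_r$ needs only the lemma's own hypothesis on the realized graph (every non-stubborn agent has a stubborn neighbor, so $d_{\min}^{rs}\geq 1$), not Assumption \ref{assum:cvg}, which concerns $\Psi$; and strict membership $\beta\in(1-\theta,1)$ uses $d_{\min}^s>0$, an assumption implicit in both proofs (if $d_{\min}^s=0$ then $b_1=0$ and the claim reduces to positive semidefiniteness of $M$).
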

	
	\begin{proof}
		For the simplicity of notation, let $\lambda:=\lambda_{\min}(M)\geq 0$. Due to that $M$ is a real symmetric matrix, it must have a real eigenvector corresponding to the real eigenvalue $\lambda$, denoted as $\mathbf{v}$. Let $v_{\max}^r:=\max_{i\in\V_r} |v_i|$ and $v_{\max}^s:=\max_{i\in\V_s} |v_i|$. According to the equation $M\mathbf{v}=\lambda\mathbf{v}$, for any $i\in\V_s$, it holds
		\begin{equation*}
			\begin{aligned}
				(\frac{d_i\theta}{1-\theta}+d_i-\lambda)v_i=\sum_{j\in\mathcal{N}_i}v_j.
			\end{aligned}
		\end{equation*}
		Suppose that $\lambda<\frac{\theta d_{\min}^s}{1-\theta}$. Taking absolute values of both sides and using the triangular inequality, we then have
		\begin{equation}
			\begin{aligned}
				(\frac{d_i\theta}{1-\theta}+d_i-\lambda)v_{\max}^s\leq d_i^rv_{\max}^r+d_i^sv_{\max}^s.
			\end{aligned}
		\end{equation}
		That is, $(\frac{d_i\theta}{1-\theta}+d_i^r-\lambda)v_{\max}^s\leq d_i^rv_{\max}^r$, 
		which implies that $v_{\max}^s<v_{\max}^r$. Furthermore, for all $i\in\V_s$, it holds
		\begin{equation}\label{ineq:stb_a}
			\begin{aligned}
				v_{\max}^s\leq \frac{d_i^r}{\frac{d_i\theta}{1-\theta}+d_i^r-\lambda}v_{\max}^r.
			\end{aligned}
		\end{equation}
		On the other hand, for any $i\in\V_r$, it holds
		\[(d_i-\lambda)v_i=\sum_{j\in\mathcal{N}_i}v_j. \]
		Following similar argument as above, we have
		\begin{equation*}
			\begin{aligned}
				|d_i-\lambda|v_{\max}^r&\leq d_i^rv_{\max}^r+d_i^sv_{\max}^s.
			\end{aligned}
		\end{equation*}
		Suppose that $\lambda<d_{\min}^{rs}$. We have $(d_i^s-\lambda)v_{\max}^r\leq d_i^sv_{\max}^s$,
		which gives that
		\begin{equation}\label{ineq:v_nstb}
			\begin{aligned}
				v_{\max}^r\leq \frac{d_{\min}^{rs}}{d_{\min}^{rs}-\lambda}v_{\max}^s.
			\end{aligned}
		\end{equation}
		Taking \eqref{ineq:v_nstb} into \eqref{ineq:stb_a}, we obtain
		\begin{equation}
			\begin{aligned}
				1\leq \frac{d_i^r}{\frac{\theta d_i}{1-\theta}+d_i^r-\lambda}\cdot \frac{d_{\min}^{rs}}{d_{\min}^{rs}-\lambda},
			\end{aligned}
		\end{equation}
		that is,
		\begin{equation}
			\begin{aligned}
				\lambda^2-(\frac{\theta}{1-\theta}d_i+d_i^r+d_{\min}^{rs})\lambda+\frac{\theta}{1-\theta}d_id_{\min}^{rs}\leq 0.
			\end{aligned}
		\end{equation}
		Therefore, for all $i\in\V_s$, it must be
		\[-(\frac{\theta}{1-\theta}d_i+d_i+d_{\min}^{rs})\lambda+\frac{\theta}{1-\theta}d_id_{\min}^{rs}\leq 0, \]
		which yields $ \lambda\geq \frac{\theta d_{\min}^sd_{\min}^{rs} }{d_{\min}^s+(1-\theta)d_{\min}^{rs}}$.
		Combining the arguments above, and due to the fact that 
		\begin{equation}
			\begin{aligned}
				&\min\left\{\frac{\theta d_{\min}^s}{1-\theta}, d_{\min}^{rs}, \frac{\theta d_{\min}^sd_{\min}^{rs} }{d_{\min}^s+(1-\theta)d_{\min}^{rs}} \right\}\\ 
				&= \frac{\theta d_{\min}^sd_{\min}^{rs} }{d_{\min}^s+(1-\theta)d_{\min}^{rs}},
			\end{aligned}
		\end{equation}
		the proof is then completed.
	\end{proof}
	
	\begin{remark}\label{re:compare}
		A similar problem is investigated in the proof of Theorem 4.3 in \cite{xing2023concentration}, in which the Gershgorin disk theorem is applied to estimate the minimum eigenvalue of a Laplacian-like matrix $\Bar{M}^{\mathcal{G}}$. However, for the matrix $M$ in \eqref{eq:form_M}, the lower bound of $\lambda_{\min}(M)$ given by the Gershgorin disk theorem is~$0$. In this sense, Lemma \eqref{le:min_eig} provides a more adjusted bound to $\lambda_{\min}(M)$.
	\end{remark}
	
	
	Applying the Bernstein inequality to the random graph model $\mathrm{RG}(\V,\Psi)$, we can obtain the following lemma.
	
	\begin{lemma}\label{le:degree}
		Consider the random graph model $\mathrm{RG}(\V,\Psi)$. Let Assumption \ref{assum:cvg} hold. The following inequalities hold:
		\begin{equation*}
			\begin{aligned}
				&\mathbb{P}\left(d_{\min}^s\leq \frac{\delta^s}{2}\right)\leq n_se^{-\delta^s/8}, \,
				\mathbb{P}\left(d_{\min}^{rs}\leq  \frac{\delta^{rs}}{2}\right)\leq n_re^{-\delta^{rs}/8}.\\
			\end{aligned}
		\end{equation*}
	\end{lemma}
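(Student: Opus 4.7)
The plan is to prove each of the two inequalities by applying the Chernoff bound to each agent individually and then taking a union bound over the relevant community. For the first inequality, I fix $i\in\V_s$ and observe that the realized degree $d_i(\mathcal{G})=\sum_{j\neq i}a_{ij}$ is a sum of independent Bernoulli random variables with $\mathbb{E}[d_i(\mathcal{G})]=\sum_{j\neq i}\psi_{ij}=:\mu_i\geq \delta^s$. For the second inequality, I fix $i\in\V_r$ and consider the stubborn degree $d_i^s=\sum_{j\in\V_s}a_{ij}$, again a sum of independent Bernoulli variables with expectation at least $\delta^{rs}$.

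The next step is to rewrite the event $\{d_i(\mathcal{G})\leq \delta^s/2\}$ in the form needed by the Chernoff inequality. Writing $\delta^s/2 = (1-\delta_i)\mu_i$ with $\delta_i = 1-\delta^s/(2\mu_i)$, the bound $\mu_i\geq\delta^s$ implies $\delta_i\geq 1/2$, hence $\delta_i\in(0,1)$ so the Chernoff hypothesis applies. Then
\begin{equation*}
\mathbb{P}\bigl(d_i(\mathcal{G})\leq \delta^s/2\bigr)\leq e^{-\mu_i\delta_i^2/2}\leq e^{-\mu_i/8}\leq e^{-\delta^s/8},
\end{equation*}
where I used $\delta_i^2\geq 1/4$ in the second step and $\mu_i\geq \delta^s$ in the last. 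The same computation with $d_i^s$ and $\mu_i':=\sum_{j\in\V_s}\psi_{ij}\geq \delta^{rs}$ yields $\mathbb{P}(d_i^s\leq \delta^{rs}/2)\leq e^{-\delta^{rs}/8}$ for each $i\in\V_r$.

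Finally, I would conclude by taking a union bound. Since $d_{\min}^s = \min_{i\in\V_s} d_i(\mathcal{G})$, the event $\{d_{\min}^s\leq\delta^s/2\}$ is the union over $i\in\V_s$ of the events $\{d_i(\mathcal{G})\leq\delta^s/2\}$, so
\begin{equation*}
\mathbb{P}\bigl(d_{\min}^s\leq \delta^s/2\bigr)\leq \sum_{i\in\V_s}\mathbb{P}\bigl(d_i(\mathcal{G})\leq \delta^s/2\bigr)\leq n_s e^{-\delta^s/8},
\end{equation*}
and an identical union bound over $\V_r$ yields the second claimed inequality. This proof is essentially a direct application of Chernoff plus union bound, and no real obstacle is expected; the only subtlety is the replacement of $\mu_i$ by its lower bound $\delta^s$ (respectively $\delta^{rs}$) in the exponent, which is valid because the Chernoff exponent is monotone in $\mu_i$ once $\delta_i\geq 1/2$ is guaranteed.
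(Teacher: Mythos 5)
Your proof is correct and follows essentially the same route as the paper: a Chernoff bound applied to each agent's (stubborn) degree, followed by a union bound over the community. The only cosmetic difference is that you use a per-agent deviation parameter $\delta_i\geq 1/2$, whereas the paper notes $\{d_i\leq \delta^s/2\}\subseteq\{d_i\leq \mathbb{E}[d_i]/2\}$ and applies Chernoff with $\delta=1/2$; the two are equivalent.
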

	
	\begin{proof}
		Notice that for any $i\in\V_s$, it holds $d_i^s=\sum_{j\in\V_s}a_{ij}$. Due to that $a_{ij}, \forall j\in\V_s$ are independent Bernoulli variables, we can apply the Chernoff inequality and obtain $\mathbb{P}\left(d_i\leq \frac{\mathbb{E}[d_i]}{2}\right)\leq \exp\left\{-\frac{\mathbb{E}[d_i]}{8}\right\}$.
		Therefore, it further holds
		\begin{equation*}
			\begin{aligned}
				&\mathbb{P}(d_{\min}^s\leq \frac{\delta^s}{2})\leq \mathbb{P}(\cup_{i\in\V_s}\{d_i\leq \mathbb{E}[d_i]/2\})\\
				&\leq \sum_{i\in\V_s}\mathbb{P}(d_i\leq \mathbb{E}[d_i]/2)\leq \sum_{i\in\V_s}\exp\{-\frac{\mathbb{E}[d_i]}{8}\}\leq n_s e^{-\delta^s/8}.
			\end{aligned}
		\end{equation*}
		The first inequality of the lemma is then proved. Similarly, the second inequality can be proved by the same arguments. 
		The proof is then completed.
	\end{proof}
	
	Combing Lemmas \ref{le:min_eig} and \ref{le:degree}, a probability bound for $\lambda_{\min}(M)$ can be given in the following lemma.
	
	\begin{lemma}\label{le:es_M_inv}
		Consider the model \eqref{eq:FJ_rand} over the random graph $\mathcal{G}$, with agent stubbornness defined in \eqref{eq:stb}. Let Assumption \ref{assum:cvg} hold. It holds that 
		\begin{equation*}
			\begin{aligned}
				&\mathbb{P}\left(\lambda_{\min}(M)\geq \frac{\theta \delta^s\delta^{rs}/2}{\delta^s+(1-\theta)\delta^{rs}}\right)>1-\sigma_1,
			\end{aligned}
		\end{equation*}
		with $\sigma_1=n_se^{-\delta^s/8}+n_re^{-\delta^{rs}/8}$.
	\end{lemma}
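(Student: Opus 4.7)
The plan is to combine Lemma \ref{le:min_eig}, which is a deterministic inequality relating $\lambda_{\min}(M)$ to the random quantities $d_{\min}^s$ and $d_{\min}^{rs}$, with the probabilistic tail bounds of Lemma \ref{le:degree}. Writing the bound of Lemma \ref{le:min_eig} as $\lambda_{\min}(M)\geq b_1(d_{\min}^s, d_{\min}^{rs})$, where $b_1(x,y):= \theta xy/(x+(1-\theta)y)$, the task reduces to replacing $(d_{\min}^s, d_{\min}^{rs})$ by their high-probability lower bounds $(\delta^s/2, \delta^{rs}/2)$ guaranteed by Lemma \ref{le:degree} and then checking the algebra.

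First I would establish that $b_1(x,y)$ is monotonically nondecreasing in each argument on $x,y>0$. A short partial-derivative computation gives
\begin{equation*}
\frac{\partial b_1}{\partial x}=\frac{\theta(1-\theta)y^2}{(x+(1-\theta)y)^2}\geq 0,\qquad \frac{\partial b_1}{\partial y}=\frac{\theta x^2}{(x+(1-\theta)y)^2}\geq 0,
\end{equation*}
so on the event $\{d_{\min}^s\geq \delta^s/2\}\cap \{d_{\min}^{rs}\geq \delta^{rs}/2\}$ we have $b_1(d_{\min}^s, d_{\min}^{rs})\geq b_1(\delta^s/2, \delta^{rs}/2)$. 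A direct substitution followed by cancellation of the factor $1/2$ gives $b_1(\delta^s/2,\delta^{rs}/2)=\theta\delta^s\delta^{rs}/\bigl(2(\delta^s+(1-\theta)\delta^{rs})\bigr)$, exactly the threshold appearing in the statement.

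Second, I would control the complement event by a union bound. Lemma \ref{le:degree} yields $\mathbb{P}(d_{\min}^s<\delta^s/2)\leq n_s e^{-\delta^s/8}$ and $\mathbb{P}(d_{\min}^{rs}<\delta^{rs}/2)\leq n_r e^{-\delta^{rs}/8}$, so the probability of the bad event is at most $\sigma_1=n_s e^{-\delta^s/8}+n_r e^{-\delta^{rs}/8}$. Intersecting with the deterministic inequality from Lemma \ref{le:min_eig} then delivers the claim. I do not foresee any substantive obstacle: the proof is essentially an assembly step, and the only points requiring care are the monotonicity check on $b_1$ and the arithmetic that reconciles $b_1(\delta^s/2,\delta^{rs}/2)$ with the stated constant, both of which are routine.
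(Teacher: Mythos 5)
Your proposal is correct and follows essentially the same route as the paper: apply Lemma \ref{le:degree} with a union bound to get $d_{\min}^s>\delta^s/2$ and $d_{\min}^{rs}>\delta^{rs}/2$ with probability at least $1-\sigma_1$, note that $d_{\min}^{rs}>0$ on this event makes the hypothesis of Lemma \ref{le:min_eig} hold, and use the monotonicity of $b_1$ in both arguments (which the paper asserts and you verify by differentiation) to substitute the halved expected degrees and recover the stated threshold.
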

	
	\begin{proof}
		From Lemma \ref{le:degree}, with probability $1-\sigma_1$, it holds $d_{\min}^s>\frac{\delta^s}{2}$ and $d_{\min}^{rs}>\frac{\delta^{rs}}{2}>0$. Therefore, the condition of Lemma \ref{le:min_eig} is satisfied. Noticing that $b_1$ in Lemma \ref{le:min_eig} is monotonically increasing with both $d_{\min}^s$ and $d_{\min}^{rs}$, the desired conclusion is then obtained. 
	\end{proof}
	
	\subsection{probability bound for $\|\mathbf{x}(\infty)-\Bar{\mathbf{x}}(\infty)\|$}
	
	By means of Lemma \ref{le:es_M_inv}, the following theorem can be obtained, giving a probability bound for $\|\mathbf{x}(\infty)-\Bar{\mathbf{x}}(\infty)\|$. 
	
	\begin{theorem}\label{th:concen}
		Consider the model \eqref{eq:FJ_rand} over the random graph $\mathcal{G}$, with agent stubbornness defined in \eqref{eq:stb}. Let Assumptions \ref{assum:bound} and \ref{assum:cvg} hold, and suppose 
		$\Delta^s\geq \log n$. It holds that 
		\[\mathbb{P}(\|\mathbf{x}(\infty)-\bar{\mathbf{x}}(\infty)\|\leq \epsilon_n\|\mathbf{x}(0)\|)> 1-\eta_n, \]
		where 
		\begin{subequations}\label{eq:prob_bd}
			\begin{align}
				\epsilon_n&= \frac{6}{1-\theta}\cdot\left(\frac{1}{\delta^{rs}}+\frac{1-\theta}{\delta^s}\right)\sqrt{\Delta^s \log n}+\frac{4(2-\theta)}{\theta(1-\theta)^2}\cdot \notag\\
				&\left(\frac{1}{\delta^{rs}}+\frac{1-\theta}{\delta^s}\right)^2\cdot
				\sqrt{\Delta\log n}\cdot \Delta^s,\label{eq:prob_bd1}\\
				\eta_n&= n_se^{-\frac{\delta^s}{8}}+n_re^{-\frac{\delta^{rs}}{8}}+2n_sn^{-\frac{3}{2}}
				+2n^{-\frac{1}{5}}=O(n^{-q}),\label{eq:prob_bd2}
			\end{align}
		\end{subequations}
		with $q=\min\{\frac{c_1\wedge c_2}{8}-1,\frac{1}{5} \}$.
	\end{theorem}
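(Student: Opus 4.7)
Since $\mathbf{x}(\infty)=P\mathbf{x}(0)$ and $\bar{\mathbf{x}}(\infty)=\bar P\mathbf{x}(0)$ by \eqref{eq:form_P} and \eqref{eq:form_P_exp}, the plan is to prove the sharper statement $\|P-\bar P\|\le \epsilon_n$ with probability at least $1-\eta_n$, which gives the theorem via $\|\mathbf{x}(\infty)-\bar{\mathbf{x}}(\infty)\|\le \|P-\bar P\|\|\mathbf{x}(0)\|$. Using the resolvent identity $M^{-1}-\bar M^{-1}=M^{-1}(\bar M-M)\bar M^{-1}$, I would split
\begin{equation*}
P-\bar P = M^{-1}(I-\Theta)^{-1}\Theta(D-\bar D)+M^{-1}(\bar M-M)\bar M^{-1}(I-\Theta)^{-1}\Theta\bar D,
\end{equation*}
and observe from \eqref{eq:form_M} that $\bar M-M=(A-\bar A)-(I-\Theta)^{-1}(D-\bar D)$. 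The block structure $\Theta=\diag(\theta I_{n_s},\mathbf{0}_{n_r})$ is then exploited: $(I-\Theta)^{-1}\Theta(D-\bar D)$ is a diagonal matrix supported on $\V_s$, so its norm equals $\frac{\theta}{1-\theta}\max_{i\in\V_s}|d_i-\bar d_i|$, while $\|(I-\Theta)^{-1}\Theta\bar D\|\le \frac{\theta\Delta^s}{1-\theta}$.

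To control the two inverse norms, Lemma \ref{le:es_M_inv} delivers $\|M^{-1}\|\le \tfrac{2}{\theta}(\tfrac{1}{\delta^{rs}}+\tfrac{1-\theta}{\delta^s})$ with probability at least $1-\sigma_1$, and applying Lemma \ref{le:min_eig} deterministically to the expected graph $\bar{\mathcal{G}}$ (whose relevant minimum degrees are exactly $\delta^s$ and $\delta^{rs}$) yields $\|\bar M^{-1}\|\le \tfrac{1}{\theta}(\tfrac{1}{\delta^{rs}}+\tfrac{1-\theta}{\delta^s})$.

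The remaining random quantities are handled by three concentration estimates. First, a scalar Bernstein/Chernoff bound applied to each stubborn degree $d_i=\sum_{j}a_{ij}$ (variance proxy $\le \bar d_i\le \Delta^s$) at deviation $3\sqrt{\Delta^s\log n}$, followed by a union bound over $\V_s$, gives $\max_{i\in\V_s}|d_i-\bar d_i|\le 3\sqrt{\Delta^s\log n}$ with failure probability at most $2n_s n^{-3/2}$. Second, the analogous estimate over all agents yields $\max_i|d_i-\bar d_i|\le 2\sqrt{\Delta\log n}$. Third, the matrix Bernstein inequality applied to the decomposition $A-\bar A=\sum_{i<j}(a_{ij}-\psi_{ij})(\mathbf{e}_i\mathbf{e}_j^\top+\mathbf{e}_j\mathbf{e}_i^\top)$, with $U=1$ and variance proxy $\|\sum_{i<j}\mathbb{E}[X_{ij}^2]\|\le \Delta$, produces $\|A-\bar A\|\le 2\sqrt{\Delta\log n}$ with failure probability at most $2n^{-1/5}$; the hypothesis $\Delta^s\ge \log n$ (hence $\Delta\ge \log n$) is used to absorb the $Ut/3$ term in the Bernstein exponent. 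Combining the last two events gives $\|\bar M-M\|\le \frac{2(2-\theta)}{1-\theta}\sqrt{\Delta\log n}$.

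Plugging all these estimates back into the two-term decomposition of $P-\bar P$ reproduces exactly the two summands of $\epsilon_n$ in \eqref{eq:prob_bd1}, and a final union bound over the four good events collects the failure probabilities into \eqref{eq:prob_bd2}. The main obstacle I anticipate is preserving the block-diagonal structure throughout the perturbation argument: using $\max_i|d_i-\bar d_i|$ in place of $\max_{i\in\V_s}|d_i-\bar d_i|$ in the first term would only yield $\sqrt{\Delta\log n}$ and miss the sharper $\sqrt{\Delta^s\log n}$ scaling that appears in the first summand of $\epsilon_n$. A secondary difficulty is the variance computation $\|\sum_{i<j}\mathbb{E}[X_{ij}^2]\|\le \Delta$ in the matrix Bernstein step, which requires recognizing $X_{ij}^2$ as a diagonal rank-$2$ matrix supported on coordinates $\{i,j\}$ and summing edge contributions.
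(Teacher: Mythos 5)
Your overall architecture coincides with the paper's: the same splitting of $P-\bar P$ into a $D_s-\bar D_s$ term plus a resolvent term, the probabilistic bound on $\|M^{-1}\|$ from Lemma \ref{le:es_M_inv}, the deterministic bound $\|\bar M^{-1}\|\le\frac{1}{\theta}\left(\frac{1}{\delta^{rs}}+\frac{1-\theta}{\delta^s}\right)$ (which the paper uses only implicitly; making it explicit by applying Lemma \ref{le:min_eig} to the weighted expected graph, as you do, is a point in your favor), the bound $\|\bar D_s\|\le\Delta^s$ thanks to $\Theta$ vanishing on $\V_r$, and the concentration of the stubborn-agent degrees at deviation $3\sqrt{\Delta^s\log n}$; your bookkeeping reproduces $\epsilon_n$ in \eqref{eq:prob_bd1} exactly. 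The one genuinely different step is the control of $\|M-\bar M\|$: the paper writes $M-\bar M=\sum_{i<j}X_{ji}$ with block-dependent matrices $E_{ji}$, computes $\mathbb{E}[X_{ji}^2]$ case by case, and applies matrix Bernstein once, obtaining $\|M-\bar M\|\le 2(1+\frac{1}{1-\theta})\sqrt{\Delta\log n}$ at failure probability $2n^{-1/5}$; you instead use the identity $M-\bar M=(I-\Theta)^{-1}(D-\bar D)-(A-\bar A)$ and bound the two pieces separately, $\|A-\bar A\|\le 2\sqrt{\Delta\log n}$ by matrix Bernstein and $\max_i|d_i-\bar d_i|\le 2\sqrt{\Delta\log n}$ by a scalar Chernoff/Bernstein union bound over all agents. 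Your route avoids the case-by-case second-moment computation and yields the same deterministic bound $\frac{2(2-\theta)}{1-\theta}\sqrt{\Delta\log n}$, but it costs one extra high-probability event: each of your two pieces fails with probability up to $2n^{-1/5}$, so your final union bound gives failure probability at most $\eta_n+2n^{-1/5}$ rather than the $\eta_n$ of \eqref{eq:prob_bd2}. This leaves the order $O(n^{-q})$ and the substance of the theorem untouched, but to recover the stated $\eta_n$ verbatim you would need either to treat $M-\bar M$ as a single Bernstein sum (as the paper does) or to enlarge one deviation constant, at the cost of a slightly bigger $\epsilon_n$.
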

	
	\begin{proof}
		Under Assumption \ref{assum:cvg}, the matrix $\Bar{M}$ is invertible. From Lemma \ref{le:es_M_inv}, with probability at least of $1-\sigma_1$, $M$ becomes invertible and it holds 
		\begin{equation}\label{eq:es_M_inv}
			\|M^{-1}\|=\frac{1}{\lambda_{\min}(M)}<\frac{\delta^s+(1-\theta)\delta^{rs}}{\theta \delta^s\delta^{rs}/2}.
		\end{equation} 
		Therefore, according to Lemma \ref{le:FJ_conv}, it holds
		\begin{equation}\label{eq:es_dif}
			\begin{aligned}
				&\|\mathbf{x}(\infty)-\bar{\mathbf{x}}(\infty)\|=\|(P-\bar{P})\mathbf{x}(0)\|\\
				&=\|M^{-1}(I-\Theta)^{-1}\Theta D-\bar{M}^{-1}(I-\Theta)^{-1}\Theta\Bar{D}\|\|\mathbf{x}(0)\|\\
				&\leq \|M^{-1}(I-\Theta)^{-1}\Theta(D-\bar{D})\|\|\mathbf{x}(0)\|\\
				&+ \|(M^{-1}-\bar{M}^{-1})(I-\Theta)^{-1}\Theta\bar{D}\|\|\mathbf{x}(0)\|\\
				&= \|M^{-1}(I-\Theta)^{-1}\Theta\diag(D_s-\bar{D}_s,\mathbf{0})\|\|\mathbf{x}(0)\|\\
				&+\|(M^{-1}-\bar{M}^{-1})(I-\Theta)^{-1}\Theta\diag(\bar{D}_s,\mathbf{0})\|\|\mathbf{x}(0)\|,
			\end{aligned}
		\end{equation}
		where 
		\begin{equation*}
			\begin{aligned}
				D_s&=\diag(d_1(\mathcal{G}),\dots,d_{n_s}(\mathcal{G})),\\ 
				\Bar{D}_s&=\diag(d_1(\Bar{\mathcal{G}}),\dots,d_{n_s}(\Bar{\mathcal{G}}))
			\end{aligned}
		\end{equation*}
		are the degree matrices of the stubborn agents in the random graph $\mathcal{G}$ and expected graph $\Bar{\mathcal{G}}$, respectively, and the last equality is due to the associative law and that $\theta_i=0, \forall i\in\V_r$. Notice that $M^{-1}-\Bar{M}^{-1}=\Bar{M}^{-1}(\Bar{M}-M)M^{-1}$. We then have 
		\begin{equation}\label{eq:dif}
			\begin{aligned}
				\|\mathbf{x}(\infty)-\bar{\mathbf{x}}(\infty)\|&\leq \frac{\theta}{1-\theta}\|M^{-1}\|(\|D_s-\Bar{D}_s\|\\
				&+\|\Bar{M}^{-1}\|\|M-\Bar{M}\|\|\Bar{D}_s\|)\|\mathbf{x}(0)\|.
			\end{aligned}
		\end{equation}
		The following proof is divided into $3$ steps: at first, an estimation is given to $\|D_s-\Bar{D}_s\|$ and $\|M-\Bar{M}\|$, respectively, then the estimations are combined for a bound of $\|\mathbf{x}(\infty)-\bar{\mathbf{x}}(\infty)\|$. 
		
		\noindent \textbf{Step $1$: estimation of} $D_s-\bar{D}_s$. According to the Bernstein inequality, for any $a>0$, we have
		\begin{equation*}
			\begin{aligned}
				\mathbb{P}(\|D_s-\bar{D}_s\|\geq a) \leq 2n_s\mathrm{exp}\left\{\frac{-a^2}{2(2\Delta_{s}+a/3)} \right\}.
			\end{aligned}
		\end{equation*}
		Taking $a=3\sqrt{\Delta_s \log n}$, we have 
		\begin{equation}\label{eq:D_dif}
			\begin{aligned}
				\mathbb{P}(\|D_s-\bar{D}_s\|\geq a)&\leq 2n_s\exp\left\{\frac{-9\Delta_s \log n}{2(2\Delta_s+\sqrt{\Delta_s \log n})}\right\}\\
				&\leq 2n_s n^{-\frac{3}{2}},
			\end{aligned}
		\end{equation}
		where the last inequality is from $\Delta_s\geq \log n$. 
		
		\noindent \textbf{Step $2$: estimation of } $\|M-\Bar{M}\|$. Note that
		\[M-\bar{M}=\sum_{i=1}^{n-1}\sum_{j=i+1}^nX_{ji}, \]
		where $X_{ji}$ is defined as $X_{ji}:=(a_{ji}-\psi_{ji})E_{ji}$ such that
		\begin{equation*}
			\begin{aligned}
				&E_{ji}:=\\
				&\left\{\begin{array}{cl}
					\frac{1}{1-\theta}\left(\mathbf{e}_i\mathbf{e}_i^\top+\mathbf{e}_j\mathbf{e}_j^\top\right)-\mathbf{e}_i\mathbf{e}_j^\top-\mathbf{e}_j\mathbf{e}_i^\top,  & i,j\in\V_s, \\
					\frac{1}{1-\theta}\mathbf{e}_i\mathbf{e}_i^\top+\mathbf{e}_j\mathbf{e}_j^\top-\mathbf{e}_i\mathbf{e}_j^\top-\mathbf{e}_j\mathbf{e}_i^\top,  & i\in\V_s, j\in\V_r,\\
					\mathbf{e}_i\mathbf{e}_i^\top+\mathbf{e}_j\mathbf{e}_j^\top-\mathbf{e}_i\mathbf{e}_j^\top-\mathbf{e}_j\mathbf{e}_i^\top, & i,j\in\V_r.
				\end{array}\right.
			\end{aligned}
		\end{equation*}
		Accordingly, 
		\begin{itemize}
			\item for $i,j\in\V_s$, it holds
			\begin{equation*}
				\begin{aligned}
					\mathbb{E}[X_{ji}^2]&=(\psi_{ji}-\psi_{ji}^2)\left\{ \left(1+\frac{1}{(1-\theta)^2}\right)\left(\mathbf{e}_i\mathbf{e}_i^\top\right.\right.\\
					&\left.+\left.\mathbf{e}_j\mathbf{e}_j^\top\right)
					-\frac{2}{1-\theta}\left(\mathbf{e}_i\mathbf{e}_j^\top+\mathbf{e}_j\mathbf{e}_i^\top\right)\right\}.
				\end{aligned}
			\end{equation*}
			\item for $i\in\V_s,j\in\V_r$, it holds
			\begin{equation*}
				\begin{aligned}
					\mathbb{E}[X_{ji}^2]&=(\psi_{ji}-\psi_{ji}^2)\left\{ \left(1+\frac{1}{(1-\theta)^2}\right)\mathbf{e}_i\mathbf{e}_i^\top\right.\\
					&\left.+2\mathbf{e}_j\mathbf{e}_j^\top
					-(1+\frac{1}{1-\theta})\left(\mathbf{e}_i\mathbf{e}_j^\top+\mathbf{e}_j\mathbf{e}_i^\top\right)\right\}.
				\end{aligned}
			\end{equation*}
			\item for $i,j\in\V_r$, it holds
			\begin{equation*}
				\begin{aligned}
					\mathbb{E}[X_{ji}^2]&=2(\psi_{ji}-\psi_{ji}^2)\left( \mathbf{e}_i\mathbf{e}_i^\top+\mathbf{e}_j\mathbf{e}_j^\top
					-\mathbf{e}_i\mathbf{e}_j^\top-\mathbf{e}_j\mathbf{e}_i^\top\right).
				\end{aligned}
			\end{equation*}
		\end{itemize}
		Therefore, according to $\psi_{ji}-\psi_{ji}^2\leq \psi_{ji}$ and the Gershgorin disk theorem, it holds
		\begin{equation}
			\begin{aligned}
				&\|\sum_{i=1}^{n-1}\sum_{j=i+1}^n\mathbb{E}[X_{ji}^2]\|\leq \max\left\{ \left(1+\frac{1}{1-\theta}\right)^2\Delta^s\right.\\
				&\qquad\qquad-\frac{\theta}{1-\theta}\delta^{sr},
				\left.4\Delta^r+\frac{\theta}{1-\theta}\Delta^{rs}\right\}:=o^2\\
			\end{aligned}
		\end{equation}
		Applying the Gershgorin disk theorem again to the matrix $X_{ji}$, it can be verified that 
		\begin{equation*}
			\begin{aligned}
				\|X_{ji}\|=\lambda_{\max}(X_{ji})\leq
				1+\frac{1}{1-\theta}, \quad \forall j\in\V.
			\end{aligned}
		\end{equation*}
		According to the Bernstein theorem, for any  $a>0$, we have
		\begin{equation}
			\begin{aligned}
				\mathbb{P}(\|M-\bar{M}\|\geq a)\leq 2n\exp\left\{\frac{-a^2/2}{o^2+(1+\frac{1}{1-\theta})a/3}\right\}.
			\end{aligned}
		\end{equation}
		Taking $a=2(1+\frac{1}{1-\theta})\sqrt{\Delta\log n}$, it then holds
		\begin{equation}\label{eq:M_dif}
			\begin{aligned}
				\mathbb{P}(\|M-\bar{M}\|\geq a)\leq 2n\exp\left\{\frac{-2\log n}{1+\frac{2}{3}\sqrt{\frac{\log n}{\Delta}}}\right\}
				\leq 2n^{-\frac{1}{5}},
			\end{aligned}
		\end{equation}
		where the first inequality is due to that that $4\Delta^r+\frac{\theta}{1-\theta}\Delta^{rs}\leq (1+\frac{1}{1-\theta})^2\Delta^r$, and the second inequality is from $\theta<1$ and $\Delta\geq \Delta^s\geq \log n$.
		
		\noindent\textbf{Step $3$: overall estimation of $\|\mathbf{x}(\infty)-\bar{\mathbf{x}}(\infty)\|$}. Combining the \eqref{eq:es_M_inv}, \eqref{eq:D_dif} and \eqref{eq:M_dif}, we can conclude that with at least probability $1-\eta_n$, the following inequalities hold simultaneously 
		\begin{equation*}
			\begin{aligned}
				&\|M^{-1}\|<\frac{\delta^s+(1-\theta)\delta^{rs}}{\theta \delta^s\delta^{rs}/2}, \; \|D_s-\bar{D}_s\|<3\sqrt{\Delta_s \log n}, \; \\
				&\|M-\Bar{M}\|< 2(1+\frac{1}{1-\theta})\sqrt{\Delta\log n}.
			\end{aligned}
		\end{equation*}
		This, together with \eqref{eq:dif}, further gives that 
		\[\|\mathbf{x}(\infty)-\bar{\mathbf{x}}(\infty)\|\leq \epsilon_n\|\mathbf{x}(0)\|. \]
		Moreover, from Assumption \ref{assum:bound}, it holds 
		\begin{equation*}
			\begin{aligned}
				\eta_n&= n_se^{-\delta^s/8}+n_re^{-\delta^{rs}/8}+2n_sn^{-\frac{3}{2}}
				+2n^{-\frac{1}{5}}\\
				&\leq n^{-(\frac{c_2}{8}-1)}+n^{-(\frac{c_1}{8}-1)}+2n^{-\frac{1}{2}}+2n^{-\frac{1}{5}}=O(n^{-q}).
			\end{aligned}
		\end{equation*}
		The proof is then completed.
	\end{proof}
	
	
	
	
	\begin{remark}
		According to \eqref{eq:prob_bd1}, if $\delta^{rs}\wedge \delta^s \geq cn$ for some $c>0$, which means that the expected graph $\Bar{\mathcal{G}}$ is well connected, then the opinion distance between $\mathbf{x}(\infty)$ and $\bar{\mathbf{x}}(\infty)$ (scaled by the norm of the initial opinion vector $\mathbf{x}(0)$) is close to~$0$ if the network size $n$ is large enough. In other words, Theorem~\ref{th:concen} implies that the final opinions of the FJ model \eqref{eq:FJ_rand} will concentrate around those of the expected FJ model \eqref{eq:FJ_exp}.
	\end{remark}
	
	\begin{remark}\label{re:bd_nstub_degree}
		From the form of $\epsilon_n$ in \eqref{eq:prob_bd}, it can be seen that the probability bound depends on the topology of the expected graph $\bar{\mathcal{G}}$ in a complex way. Nevertheless, as the coefficients in \eqref{eq:prob_bd} regarding network topology are all related to the link weights between the stubborn and non-stubborn communities, an indication is that the weights can affect the opinion distance $\|\mathbf{x}(\infty)-\Bar{\mathbf{x}}(\infty)\|$ significantly. This will be verified by the following Example \ref{ex:bd_degree}.
	\end{remark}
	
	It should be pointed out that the bound $\epsilon_n$ is conservative, especially when $\theta$ is close to $1$. This can be seen if we fix the network size $n$ and let $\theta$ converge to $1$, i.e., each agent tends to be fully stubborn and keep their opinions unchanged. Accordingly, the opinion distance $\|\mathbf{x}(\infty)-\bar{\mathbf{x}}(\infty)\|$ decreases to $0$. However, from $\eqref{eq:prob_bd}$, $\epsilon_n$ diverges to infinity if $\theta\to 1$. One reason for generating the conservative bound is that \eqref{eq:dif} may be an overestimate of $\|\mathbf{x}(\infty)-\bar{\mathbf{x}}(\infty)\|$ if there exist non-stubborn agents, since $\|M^{-1}\Theta\|=\theta\|M^{-1}|_{:,\V_s}\|$ ($M^{-1}|_{:,\V_s}$ is the submatrix of $M^{-1}$ consisting of the columns indexed in $\V_s$). Nevertheless, if all the agents are stubborn, the conservativeness can be reduced, as shown in the following theorem, in which we omit the superscript of $\delta^s$ and write it as~$\delta$. 
	
	\begin{theorem}\label{th:all_stubborn}
		Consider the model \eqref{eq:FJ_rand} over the random graph $\mathcal{G}$, with all agents having stubbornness $\theta$, i.e., $\V_s=\V$. Let Assumptions \ref{assum:bound} and \ref{assum:cvg} hold. It holds that 
		\[\mathbb{P}(\|\mathbf{x}(\infty)-\bar{\mathbf{x}}(\infty)\|\leq \epsilon_n'\|\mathbf{x}(0)\|)> 1-O(n^{-q}), \]
		with 
		\[\epsilon_n'=\frac{6\sqrt{\Delta\log n}}{\delta}+\frac{4(2-\theta)\sqrt{\Delta\log n} \Delta}{\theta\delta^2} \]
		and $q=\min\{\frac{c_2}{8}-1,\frac{1}{5} \}$. 
	\end{theorem}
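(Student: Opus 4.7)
The plan is to mirror the three-step structure of the proof of Theorem \ref{th:concen}, exploiting the simplifications that arise when all agents share the same stubbornness $\theta$. With $\V_s=\V$ and $\Theta=\theta I$, one has the clean decomposition $M = \tfrac{1}{1-\theta}D - A$ and $\bar M = \tfrac{1}{1-\theta}\bar D - \bar A$, and the inequality \eqref{eq:dif} specializes to
\begin{equation*}
\|\mathbf{x}(\infty)-\bar{\mathbf{x}}(\infty)\|\leq\tfrac{\theta}{1-\theta}\|M^{-1}\|\bigl(\|D-\bar D\|+\|\bar M^{-1}\|\|M-\bar M\|\|\bar D\|\bigr)\|\mathbf{x}(0)\|.
\end{equation*}

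The decisive improvement over Theorem \ref{th:concen} lies in the bound on $\|M^{-1}\|$. Lemma \ref{le:min_eig} was tailored to the mixed case and is not directly applicable here since $d_{\min}^{rs}$ is undefined when $\V_r=\emptyset$. Instead, I would apply the Gershgorin disk theorem directly to $M$: each disk is centered at $\tfrac{d_i}{1-\theta}$ with radius $d_i$, which yields $\lambda_{\min}(M)\geq \tfrac{\theta d_{\min}}{1-\theta}$. Combining this with the Chernoff estimate from Lemma \ref{le:degree} (applied with $\V_s=\V$), one obtains $d_{\min}\geq \delta/2$ with probability at least $1-n\,e^{-\delta/8}$, so that on this event $\|M^{-1}\|\leq \tfrac{2(1-\theta)}{\theta\delta}$; the analogous deterministic bound for the expected graph is $\|\bar M^{-1}\|\leq \tfrac{1-\theta}{\theta\delta}$.

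Next, I would import Steps~1 and 2 of the proof of Theorem \ref{th:concen} essentially verbatim. Since all agents are stubborn we have $D=D_s$ and $\Delta_s=\Delta$, so the Bernstein argument \eqref{eq:D_dif} gives $\|D-\bar D\|<3\sqrt{\Delta\log n}$ with probability at least $1-2n_s n^{-3/2}$. Similarly, the matrix Bernstein argument \eqref{eq:M_dif} reduces to the single case $i,j\in\V_s$ in the definition of $E_{ji}$ and yields $\|M-\bar M\|<\tfrac{2(2-\theta)}{1-\theta}\sqrt{\Delta\log n}$ with probability at least $1-2n^{-1/5}$. Substituting these four estimates into the displayed inequality produces $\epsilon_n'$ as stated, and a union bound over the three failure events gives the tail $O(n^{-q})$ with $q=\min\{c_2/8-1,\,1/5\}$ (note that $\delta^{rs}$ and hence $c_1$ no longer enter).

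The only genuine obstacle is the eigenvalue bound: without it the claimed improvement is invisible. What makes the Gershgorin argument work here and fail in the mixed case is that with $\Theta=\theta I$ the diagonal dominance of $M$ is uniform across agents, giving the correct scaling $\|M^{-1}\|\sim \tfrac{1-\theta}{\theta\delta}$. This is what allows the factor $\tfrac{\theta}{1-\theta}$ from $(I-\Theta)^{-1}\Theta$ to cancel the $\tfrac{1-\theta}{\theta}$ in $\|M^{-1}\|$, so that $\epsilon_n'$ stays bounded and in fact vanishes as $\theta\to 1$, precisely as demanded by the conservativeness discussion preceding the theorem.
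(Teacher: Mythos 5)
Your proposal follows essentially the same route as the paper's own proof: specialize \eqref{eq:dif} to $\Theta=\theta I$, bound $\|M^{-1}\|$ via Gershgorin ($\lambda_{\min}(M)\geq \tfrac{\theta}{1-\theta}d_{\min}$) combined with the Chernoff estimate $d_{\min}\geq \delta/2$ with probability $1-ne^{-\delta/8}$, use the deterministic bound $\|\bar M^{-1}\|\leq \tfrac{1-\theta}{\theta\delta}$, and import the Bernstein estimates \eqref{eq:D_dif} and \eqref{eq:M_dif} (noting that $\Delta\geq\delta\geq c_2\log n$ makes the condition $\Delta\geq \log n$ automatic, a point worth stating explicitly). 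One correction to your closing remark: $\epsilon_n'$ does \emph{not} vanish as $\theta\to 1$; it tends to $\tfrac{6\sqrt{\Delta\log n}}{\delta}+\tfrac{4\sqrt{\Delta\log n}\,\Delta}{\delta^2}$, so the bound remains conservative in that limit (as the paper itself points out), though this does not affect the validity of your derivation.
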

	
	\begin{proof}
		As $\V_s=\V$, \eqref{eq:dif} becomes  
		\begin{equation}\label{eq:dif_all_stub}
			\begin{aligned}
				\|\mathbf{x}(\infty)-\bar{\mathbf{x}}(\infty)\|&\leq \frac{\theta}{1-\theta}\|M^{-1}\|(\|D-\Bar{D}\|\\
				&+\|\Bar{M}^{-1}\|\|M-\Bar{M}\|\|\Bar{D}\|)\|\mathbf{x}(0)\|.
			\end{aligned}
		\end{equation}
		From the Gershgorin disk theorem, we have $\lambda_{\min}(M)\geq \frac{\theta}{1-\theta}d_{\min}$. Following the same argument as in Lemma \ref{le:degree}, we have $\mathbb{P}(d_{\min}\leq \delta/2)\leq ne^{-\delta/8}$. Therefore, it holds that
		\begin{equation}
			\mathbb{P}(\|M^{-1}\|\leq \frac{2(1-\theta)}{\theta\delta})>1- ne^{-\delta/8}.
		\end{equation}
		On the other hand, from \eqref{eq:D_dif}, if $\Delta>\log n$, we can obtain
		\begin{equation}
			\mathbb{P}(\|D-\bar{D}\|\geq 3\sqrt{\Delta\log n}) \leq 2n^{-\frac{1}{2}}.
		\end{equation}
		Similarly to the proof of Theorem \ref{th:concen}, for any $a>0$, we have
		\begin{equation*}
			\mathbb{P}(\|M-\bar{M}\|\geq a)\leq 2n\exp\left\{\frac{-a^2/2}{o^2+(1+\frac{1}{1-\theta})a/3} \right\},
		\end{equation*}
		with $o^2=(1+\frac{1}{1-\theta})^2\Delta$. Taking $a=2(1+\frac{1}{1-\theta})\sqrt{\Delta\log n}$, it then holds
		\begin{equation}\label{eq:M_dif_all_stub}
			\begin{aligned}
				\mathbb{P}(\|M-\bar{M}\|\geq 2(1+\frac{1}{1-\theta})\sqrt{\Delta\log n})\leq 2n^{-\frac{1}{5}},
			\end{aligned}
		\end{equation}
		Combining \eqref{eq:dif_all_stub}-\eqref{eq:M_dif_all_stub}, we can conclude that with probability at least of $1-ne^{-\delta}-2n^{-\frac{1}{2}}-2n^{-\frac{1}{5}}$, it holds
		\begin{equation*}
			\begin{aligned}
				&\|\mathbf{x}(\infty)-\bar{\mathbf{x}}(\infty)\|<\frac{\theta}{1-\theta}\cdot \frac{2(1-\theta)}{\theta\delta} \Big(3\sqrt{\Delta\log n}\\
				& +\frac{1-\theta}{\theta\delta} \cdot 2(1+\frac{1}{1-\theta})\sqrt{\Delta\log n} \Delta\Big)\|\mathbf{x}(0)\|=\epsilon_n'\|\mathbf{x}(0)\|.
			\end{aligned}
		\end{equation*}
		According to Assumption \ref{assum:bound}, it holds $\delta >c_2\log n>8\log n$, the desired conclusion is then obtained.
	\end{proof}
	
	\begin{remark}\label{re:complex_stub}
		According to Theorem \ref{th:all_stubborn}, if all agents are stubborn, the probability bound $\epsilon_n'$ is monotonically decreasing with $\theta$. This is intuitive, since stubbornness represents memory of each agent toward its initial opinion, which is deterministic and can be regarded as an ``anchor" against the randomness of social interactions in the opinion evolution process.
	\end{remark}
	
	From Theorem \ref{th:all_stubborn}, as $\theta$ goes close to $1$, the probability bound $\epsilon_n'$ tends to be $\frac{6\sqrt{\Delta\log n}}{\delta}+\frac{4\sqrt{\Delta\log n} \Delta}{\delta^2}$, instead of~$0$. This means that $\epsilon_n'$ is still a conservative bound, whose conservativeness is caused by the inequalities of matrix norms in \eqref{eq:es_dif} and \eqref{eq:dif}. To obtain a more accurate bound is a work for future studies.
	

	\section{Examples}\label{sec:example}
	
	In this section, we apply the results in Section \ref{sec:result} to an SBM with communities $\V_s$ and $\V_r$, defined as follows.   
	
	Suppose the link probability matrix between the communities is 
	\[\Pi=\left(\begin{array}{cc}
		p_s & p_{sr} \\
		p_{sr} & p_r
	\end{array}\right), \]
	where $p_s,p_r, p_{sr}\in (0,1)$ are constants. Let $r_s$ be the ratio of stubborn agents in the whole group, i.e., $r_s=\frac{n_s}{n}$, and correspondingly $1-r_s$ is the ratio of non-stubborn agents. We use the pair $(r_s, \Pi)$ to represent the SBM. The following corollary is obtained by applying Theorem~\ref{th:concen}.
	
	\begin{corollary}\label{coro:concen}
		Consider the model \eqref{eq:FJ_rand} on the SBM $(r_s,\Pi)$, with agent stubbornness defined in \eqref{eq:stb}. Assume that $r_s\in (0,1)$. If $n$ is large enough, it holds that 
		\[\mathbb{P}(\|\mathbf{x}(\infty)-\bar{\mathbf{x}}(\infty)\|\leq \bar{\epsilon}_n\|\mathbf{x}(0)\|)> 1-O(n^{-\frac{1}{5}}), \]
		or equivalently, $\mathbb{P}(\|P-\bar{P}\|\leq \bar{\epsilon}_n)> 1-O(n^{-\frac{1}{5}})$,
		with
		\begin{equation}\label{eq:bd_SBM}
			\begin{aligned}
				\bar{\epsilon}_n&= \frac{6}{1-\theta}\cdot\left(\frac{\sqrt{b_2}}{b_1}+\frac{1-\theta}{\sqrt{b_2}}\right)\sqrt{\frac{\log n}{n}}+\frac{4(2-\theta)}{\theta(1-\theta)^2}\cdot \\
				&\left(\frac{\sqrt{b_2}}{b_1}+\frac{1-\theta}{\sqrt{b_2}}\right)^2\cdot
				\sqrt{\max\{b_2,b_3\}}\sqrt{\frac{\log n}{n}}
			\end{aligned}
		\end{equation}
		where $b_1:=r_sp_{sr}$, $b_2:=r_sp_s+(1-r_s)p_{sr}$ and $b_3:=r_sp_{sr}+(1-r_s)p_r$.
	\end{corollary}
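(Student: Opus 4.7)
The plan is to specialize Theorem \ref{th:concen} to the two-block SBM $(r_s, \Pi)$ by computing the expected-degree quantities of Section \ref{sec:prob} explicitly and then simplifying the bound \eqref{eq:prob_bd1} algebraically. First I would verify that the hypotheses of Theorem \ref{th:concen} hold for all sufficiently large $n$. Since $p_s, p_r, p_{sr}$ and $r_s$ are fixed constants in $(0,1)$, every stubborn agent has expected degree of order $nb_2$ and every non-stubborn agent has expected stubborn degree exactly $n_s p_{sr} = nb_1$, both linear in $n$. Hence Assumption \ref{assum:bound} holds with any fixed $c_1, c_2 > 8$ once $n$ is large enough, Assumption \ref{assum:cvg} is immediate from $p_{sr} > 0$, and the side condition $\Delta^s \ge \log n$ is satisfied automatically.

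The second step is to read off the six expected-degree parameters. Because degrees are homogeneous within each community, one has $\delta^s = \Delta^s = (n_s-1)p_s + n_r p_{sr}$, $\delta^{rs} = \Delta^{rs} = n_s p_{sr}$, and $\Delta^r = n_s p_{sr} + (n_r-1)p_r$, so that $\delta^s/n \to b_2$, $\delta^{rs}/n \to b_1$, $\Delta^r/n \to b_3$, and $\Delta/n \to \max\{b_2, b_3\}$, each with $O(1/n)$ corrections.

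The third step is a direct substitution into \eqref{eq:prob_bd1}. Pulling a common factor $1/n$ out of $1/\delta^{rs} + (1-\theta)/\delta^s$ converts each $\sqrt{\Delta^s \log n}$ or $\sqrt{\Delta \log n}$ factor into the canonical $\sqrt{(\log n)/n}$. Two short algebraic identities then finish the rewriting: $\bigl(1/b_1 + (1-\theta)/b_2\bigr)\sqrt{b_2} = \sqrt{b_2}/b_1 + (1-\theta)/\sqrt{b_2}$ handles the first summand, and $b_2\bigl(1/b_1 + (1-\theta)/b_2\bigr)^2 = \bigl(\sqrt{b_2}/b_1 + (1-\theta)/\sqrt{b_2}\bigr)^2$ handles the second, yielding the exact form \eqref{eq:bd_SBM}. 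For the probability rate, since $\delta^s$ and $\delta^{rs}$ are $\Theta(n)$, the exponentials in \eqref{eq:prob_bd2} decay super-polynomially and the $2n_s n^{-3/2}$ term is $O(n^{-1/2})$, so the slowest-decaying contribution is the $2n^{-1/5}$ summand, giving the claimed $1 - O(n^{-1/5})$. The equivalent operator-norm statement $\|P - \bar P\| \le \bar\epsilon_n$ follows by inspection of \eqref{eq:es_dif}, whose right-hand side already bounds $\|P - \bar P\|$ before the factor $\|\mathbf{x}(0)\|$ is extracted, so the same high-probability event delivers it.

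I do not expect any genuine obstacle; the corollary is essentially a substitution plus two algebraic rewritings. The only bookkeeping care needed is to absorb the $O(1)$ additive corrections to $\delta^s$, $\Delta^s$, $\Delta^r$ (the $-p_s$ and $-p_r$ tails) into a $1 + o(1)$ multiplicative slack, which is harmless because it is already dominated by the $O(n^{-1/5})$ rate. The only point requiring a moment's attention is that in forming $\Delta$ one must take the maximum of $b_2$ and $b_3$ under the square root, reflecting whether the stubborn or the non-stubborn community is denser.
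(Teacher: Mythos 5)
Your proposal is correct and follows exactly the route the paper intends (the corollary is stated as a direct application of Theorem \ref{th:concen}): compute $\delta^{rs}=nb_1$, $\delta^s=\Delta^s\approx nb_2$, $\Delta\approx n\max\{b_2,b_3\}$, substitute into \eqref{eq:prob_bd1}, and note that the exponential and $n^{-3/2}$ terms in \eqref{eq:prob_bd2} are dominated by $2n^{-1/5}$. Your handling of the $O(1)$ degree corrections as a $1+O(1/n)$ multiplicative slack, and the observation that the proof of Theorem \ref{th:concen} actually bounds $\|P-\bar P\|$ before multiplying by $\|\mathbf{x}(0)\|$, are at the same level of rigor as the paper itself.
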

	
	\begin{remark}\label{re:bound_net_size}
		Corollary \ref{coro:concen} implies that as $n$ grows, the probability bound $\bar{\epsilon}_n$ is $O(\sqrt{\frac{\log n}{n}})$. As shown in the following Example \ref{ex:bd_net_size}, this order is close to that of the compact bound. 
	\end{remark}
	
	
	\begin{figure}[ht]
		\centering
		\includegraphics[trim=0cm 0cm 0cm 0cm,clip=true,width=8cm]{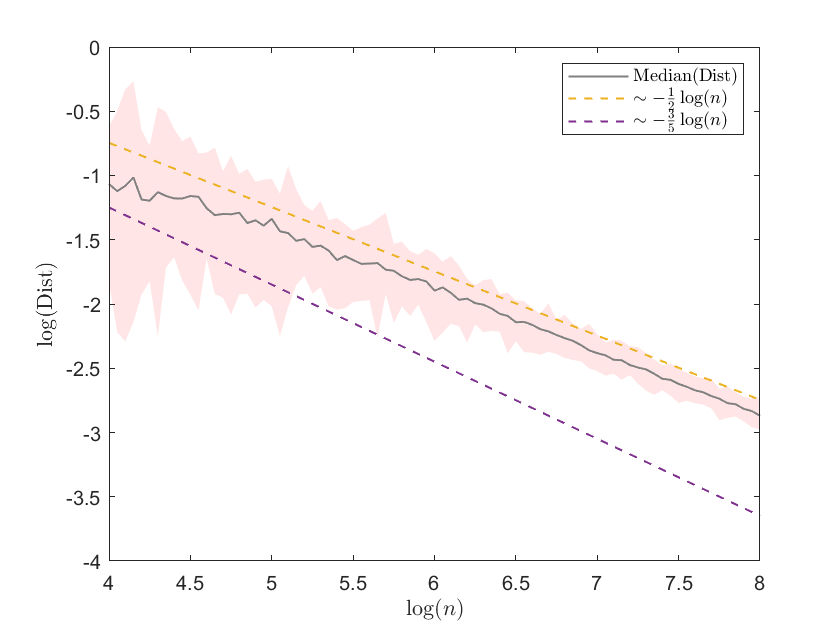}
		\caption{The $\log$-$\log$ plot of $\mathrm{Dist}$ w.r.t the network size $n$ for Example \ref{ex:bd_net_size}. The red shade represents the range of all $\mathrm{Dist}$s generated in the simulation. The gray curve is the change of the medians of $\mathrm{Dist}$ for each $n$. The two dashed curves are drawn for comparison, with slopes $-\frac{1}{2}$ and $-\frac{3}{5}$, respectively.  }
		\label{fig:bd_net_size}
	\end{figure}
	
	\begin{example}\label{ex:bd_net_size}
		Consider the SBM $(r_s,\Pi)$ with $r_s=0.1, p_s=0.3, p_r=0.3$ and $p_{sr}=0.5$. Let the stubbornness be $\theta=0.5$. For each $n$ in $\{[e^4],[e^{4.05}],\dots,[e^8]\}$, we run the SBM $(r_s,\Pi)$ for $50$ times and obtain $50$ graphs of size $n$. For each of the graphs, we calculate $\mathrm{Dist}=\|P-\bar{P}\|$, of which the $\epsilon_n$ in Theorem \ref{th:concen} is a probability bound. The results are shown in Fig. \ref{fig:bd_net_size}. The orange dashed line has slope $-\frac{1}{2}$, and the red shade is the range of all $\mathrm{Dist}s$ generated in the simulation. It can be seen that as $n$ grows, the gray curve tend to be parallel to the orange dashed line. This means that the order of $\mathrm{Dist}$ is $O(n^{-\frac{1}{2}})$, which is close to that given by Remark~\ref{re:bound_net_size}. Moreover, note that the red shade becomes narrower for larger $n$. This indicates that the opinion distance $\mathrm{Dist}$ concentrates around the gray curve as $n$ grows. Therefore, the simulation results match the conclusions of Theorem \ref{th:concen} and Corollary~\ref{coro:concen}. 
	\end{example}
	
	\begin{figure}[ht]
		\centering
		\includegraphics[trim=0cm 0cm 0cm 0cm,clip=true,width=8cm]{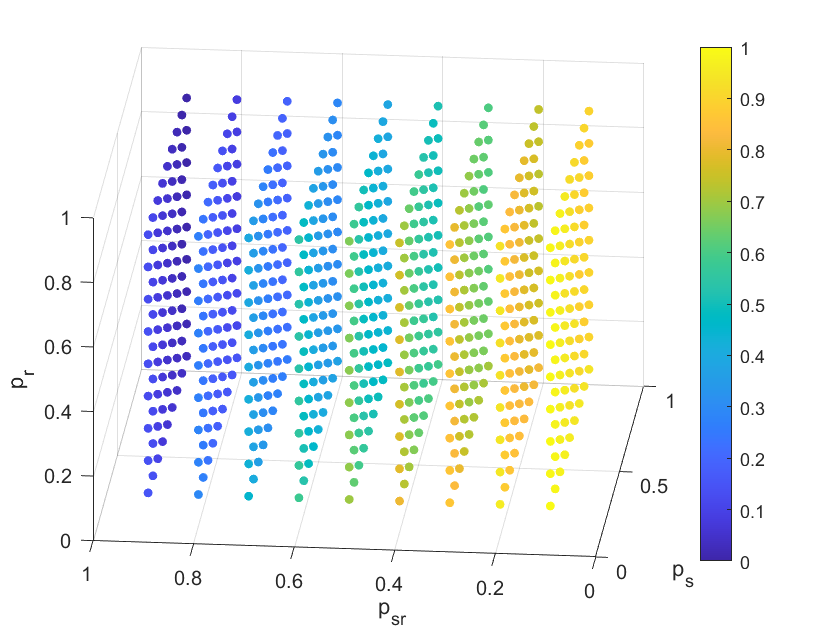}
		\caption{The scatter plot of the medians w.r.t. the degree parameter triplet $(p_s,p_r,p_{sr})$ for Example \ref{ex:bd_degree}. The node color is increasing with the corresponding median. }
		\label{fig:bd_degree}
	\end{figure}
	
	\begin{example}\label{ex:bd_degree}
		Consider the SBM $(r_s,\Pi)$ with $n=500$ and $r_s=0.5$. The stubbornness of stubborn agents are $\theta=0.5$. Let the triplet of degree parameters $(p_s,p_r,p_{sr})$ varies and takes value from the set $\{0.1,0.2,\dots,0.9\}^3$. For each triplet $(p_s,p_r,p_{sr})$, the SBM $(r_s,\Pi)$ is run for $50$ times, and generates $50$ graphs. For each of the graphs, we calculate $\mathrm{Dist}$ (the definition is the same as that of Example \ref{ex:bd_net_size}), and further obtain the median of the $50$ $\mathrm{Dist}$s. In total, $9^3$ medians are collected, each corresponding to a triplet $(p_s,p_r,p_{sr})$. In Figure \ref{fig:bd_degree}, we use scatter plot to show these medians. Each node is associated with a triplet, with its color increasing with the corresponding median. As shown in the figure, a clear observation is that the median is smaller when $(p_s,p_r,p_{sr})$ is closer to the vertex $(1,1,1)$, which indicates that the opinion distance $\mathrm{Dist}$ will be smaller (with high probability) if the underlying graph of the SBM $(r_s,\Pi)$ is more densely connected (i.e., the entries of $\Pi$ are larger). Moreover, from Figure \ref{fig:bd_degree}, the dependency of the median on $p_{sr}$ seems more obvious than that of $p_s$ and $p_r$, which gives evidence to Remark \ref{re:bd_nstub_degree}. 
	\end{example}
	
	\begin{figure}[ht]
		\centering
		\includegraphics[trim=0cm 0cm 0cm 0cm,clip=true,width=8cm]{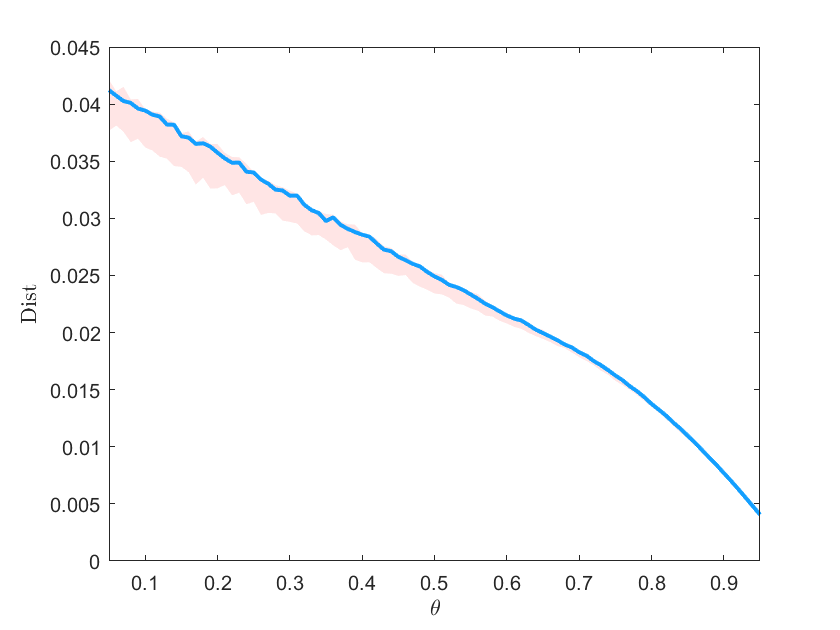}
		\caption{Plot of $\mathrm{Dist}$ w.r.t. the agent stubbornness $\theta$ for Example \ref{ex:bd_stub}. The red shade represents the range of all $\mathrm{Dist}$s generated in the simulation.}
		\label{fig:bd_stub}
	\end{figure}
	
	Lastly, we study an FJ model with all the agents being stubborn.
	
	\begin{example}\label{ex:bd_stub}
		Consider the SBM $(r_s,\Pi)$ with $n=1000$ agents. The degree parameters are set as $p_s=p_r=p_{sr}=0.2$. All of the agents are stubborn, i.e., $r_s=1$. Let the agent stubbornness $\theta$ vary and take values from the set $\{0.05,0.06,\dots, 0.95\}$. For each $\theta$, the SBM is run for $50$ times and generates $50$ graphs. For each of the graphs, we calculate $\mathrm{Dist}$ (defined the same as that of the previous examples). The results are shown in Figure \ref{fig:bd_stub}. In the figure, the blue curve represents the $0.95$-quantile for each $\theta$, and the red shade is the range of all $\mathrm{Dist}$s. As indicated by Remark~\ref{re:complex_stub}, the probability bound $\mathrm{Dist}$ is monotonically decreasing with the stubbornness $\theta$.   
	\end{example}

	\section{Conclusion}\label{sec:conclusion}
	
	This paper considered the FJ model over a random graph. The stubbornness of the stubborn agents were homogenous. The graph was realized by a random graph model, with each edge added from a Bernoulli distribution, and for different edges the distributions are independent. It has been shown that, if the non-stubborn agents were well-connected to the stubborn agents in the expected graph of the random graph model, the final opinions of the FJ model over any realized graph would concentrate around those of the FJ model over the expected graph. Moreover, as indicated by both the theoretical results and numerical examples, the opinion distance was affected significantly by the link probabilities from non-stubborn agents to stubborn agents. For the case that all of the agents were stubborn, the opinion distance would decrease with the agent stubbornness. As mentioned in the context, the proposed probability bounds for the opinion distance were conservative, which left a future direction to seek for more compact bounds.

	\bibliographystyle{abbrv}
	\bibliography{ref_concentration_FJ}
\end{document}